\documentclass[12pt]{article}
\usepackage{amsfonts,amsmath,amsthm,array,amssymb}
\newtheorem{theorem}{Theorem}
\newtheorem{lemma}{Lemma}

\usepackage{graphicx}
\usepackage{picinpar,graphicx}
\usepackage{graphics}
\usepackage{multicol}
\usepackage{tabularx}
\usepackage{ctable}
\usepackage{booktabs}
\usepackage{url}
\usepackage{graphicx}
\usepackage{epsfig}
\usepackage{graphics}
\setlength{\topmargin}{0in}
\setlength{\headheight}{0in}
\setlength{\headsep}{0in}
\setlength{\textheight}{8.5in}
\setlength{\textwidth}{6.5in}
\setlength{\oddsidemargin}{0in}
\setlength{\evensidemargin}{0in}
\setlength{\parindent}{0.25in}
\setlength{\parskip}{0.05in}
\usepackage{amsmath}
\usepackage[mathscr]{euscript}
\usepackage{ amssymb }
\usepackage{epstopdf}
\usepackage{booktabs}
\usepackage{setspace}

\DeclareGraphicsExtensions{.pdf,.png,.jpg,.eps}
\doublespacing

\begin{document}

\title{Dynamics and Control of an Invasive Species: The Case of the Rasberry crazy ant Colonies}
\author{Valerie Cheathon$^{1}$, Agustin Flores$^{2}$, Victor Suriel$^{3}$, Octavious Talbot$^{4}$, \\ Dustin Padilla$^{5}$ ,Marta Sarzynska$^{6}$, Adrian Smith$^{7}$, Luis Melara$^{8}$}
\date{}
\maketitle
\begin{center}
\footnotesize $^{1}$ New College of Interdisciplinary Arts and Sciences, Arizona State University, Glendale, Arizona\\
\footnotesize $^{2}$ Department of Mathematics, Northeastern Illinois University, Chicago, Illinois\\
\footnotesize $^{3} $ Department of Mathematics, SUNY-Stony Brook, Long Island, New York\\
\footnotesize $^{4}$ Department Mathematics, Morehouse College, Atlanta, Georgia\\
\footnotesize $^{5}$ Mathematical Institute, University of Oxford, Oxford, UK\\
\footnotesize $^{6}$Applied Mathematics for the Life and Social Sciences, Arizona State University, Tempe, Arizona\\
\footnotesize $^{7}$Department of Mathematics, Shippensburg University, Shippensburg, Pennsylvania
\end{center}

\begin{abstract}

{This project is motivated by the costs related with the documented risks  of the introduction of non-native invasive species of plants, animals, or pathogens associated with travel and international trade. Such invasive species  often have no natural enemies in their new regions. The spatiotemporal dynamics related to the invasion/spread of {\it Nylanderia fulva}, commonly known as the Rasberry crazy ant, are explored via the use of models that focus on the reproduction of ant colonies. A Cellular Automaton (CA) simulates the spatially explicit spread of ants on a grid. The impact of local spatial correlations on the dynamics of invasion is investigated numerically and analytically with the aid of a Mean Field (MF) model and a Pair Approximation (PA)  model, the latter of which accounts for adjacent cell level effects. The PA model approach considers the limited mobility range of  {\it N. fulva}, that is, the grid cell dynamics are not strongly influenced by non-adjacent cells. 
The model determines the rate of growth of colonies of {\it N. fulva} under distinct cell spatial architecture. Numerical results and qualitative conclusions on the spread and control of this invasive ant species are discussed.}
\end{abstract}

\section{Introduction}\
\indent {\it Nylanderia fulva}, commonly known as the Rasberry crazy ant, is a species originating from Brazil and Argentina \cite{gotzek2012importance}. The first documented sighting of this species in the USA was in Pasadena, Texas in 2002, by local pest exterminator Tom Rasberry, hence the nickname, The Rasberry crazy ant \cite{horn2010examining,meyers2008identification,rasberryresponse}.  The morphological and behavioral similarities between ants of a certain genus can lead to the mislabeling of the species \cite{lapolla2011monograph}. This lack of global synthesis \cite{lapolla2011monograph} is apparent in the identification of the {\it N. fulva} and {\it N. pubens} (Caribbean ant) found in Florida. In \cite{zhao2012molecular}, the authors demonstrated a molecular approach that determined that Rasberry and Carribbean ants are the same. In \cite{lapolla2011monograph}, global synthesis is encouraged to reduce the number of mislabeled ant species, so that this particular species can be accurately accounted for. The lack of research on this species means that there is less information to use in estimating  parameters in the models.

\indent Since 2002, the Rasberry ant has spread to over 20 counties in both Texas and Florida, with recorded sightings in Louisiana and Mississippi \cite{gotzek2012importance}. Rasberry ants have caused up to 150 million dollars of damage to electrical equipment by chewing through insulation and shorting circuits \cite{meyers2009ant}. Electrocuted ants release a chemical that attracts other ants to the electrocuted ants location \cite{meyers2009ant,doe}. This sudden invasion inside of the electrical unit, leads to overheating and mechanical failures \cite{homeinvasive}. Rasberry ants can seemingly nest wherever there is a suitable food source. They have been known to inhabit homes, buildings and even bee hives \cite{homeinvasive}. The ants prefer warmer temperatures and colony numbers increase during the spring, summer and fall months \cite{drees2009rasberry}. According to \cite{doe}, Rasberry ants have not been observed engaging in nuptial flights. In other ant species, a nuptial flight involves a winged ant male and female, referred to as sexuals or reproductives, mating at or near the edge of the original nest. Though winged males are born in Rasberry ant populations, their ability to spread is attributed to their seasonal budding process \cite{doe}. 

\indent The Rasberry ant tends to form a group of nests where ants do not exhibit mutual aggression, known as a super colony \cite{aguillard2011extraction,Arcila2002,meyers2008identification}. Adjacent colonies can merge and some colonies reach population densities up to one hundred times larger than other ant populations \cite {lebrunimported}. In addition, their colonies are polygynous, where one colony can have multiple queens laying eggs \cite{Arcila2002}. In a personal interview, Tom Rasberry described an observed Rasberry ant nest consisting of  96 queens in a volume of dirt less than ten gallons \cite {Rasberry_interview}. 

Due to the sheer size of the colony population, the Rasberry ant is a potent competitor with the surrounding ant species, often displacing the original ants in the area \cite{jouvenaz1983natural}. In a study on the distribution of ants in Deer Park, Texas the percentage of collected specimen of {\it Solenopsis invicta}, or the Fire ant ( an invasive species in the area) dropped from 54.5\% of collected ant species in 2006 to 45.4\% in 2007. In the same time span, the percentage of Rasberry ants collected grew from 45.4\% to 61.8\% \cite{meyers2008identification}. Laboratory studies have shown that in one-on-one confrontations with Fire ants, Rasberry ants are at a disadvantage \cite{horn2010examining}. However, the sheer colony size of the Rasberry ant effectively allows them to control food resources in a neighboring region, which results in the displacement of the Fire ant \cite{horn2010examining,jouvenaz1983natural}. The Rasberry ant displaces other ant species by chemical attacks, nest raiding, outnumbering other ant species at food sources and diel foraging activity \cite{mcdonald2012investigation}.

\indent Rasberry ants diet is varied making them an omnivorous species \cite {mcdonald2012investigation}.  They will feast on arthropods and small vertebrates for their protein needs and the sweet nectar from mealybugs and aphids and other hemipterans, for their carbohydrate needs \cite{sharma2013honeydew}. They are also attracted to sweet parts of plants and fruit \cite{mcdonald2012investigation,sharma2013honeydew}. In their native lands, Rasberry ants are controlled, if not displaced by other ant species. However none of these species exist in the United States  \cite {piva2011ant}. Chemical control is also limited. Studies show that contact insecticides that contain chlorpyrifos are effective for 43 days after application \cite{mcdonald2012investigation}. Certain fipronil based pesticides such as Termidor ® SC and Taurus® SC are used as perimeter sprays around structures and homes \cite{mcdonald2012investigation}. The long lasting residual effects of fipronil-based products are ideal for providing areas of defense against immediate recolonization. Studies show that after a 3.4kg fipronil application, 18 weeks passed before healthy colonies began to re-infest \cite {1998}. Despite its success in terminating the Rasberry ant, environmental ramifications of the chemical are discouraged due to their effects on bees, mammals and other vertebrae and invertebrae \cite{tingle2000health}.


\noindent In order to study the spreading dynamics of Rasberry ants, we present three different approximation techniques, including both spatially independent and dependent models. These results will be used to characterize the spread of the Rasberry ant in the presence of pesticide, over a homogeneous environment. First, we construct a Cellular Automaton (CA) simulation. The CA is a stochastic computer simulation of the Rasberry ants spread in a homogeneous environment. Although it is the most spatially explicit of the three models, as a discrete space model it is computationally intensive because large numbers of trials are needed to lessen stochastic anomalies. The main advantage is that it incorporates the aggregate interactions of the cells representing the states of land in our grid, making it the most accurate of the models. However, rigorous mathematical analysis of our CA model is a daunting task, and the mathematical framework needed to do so is far beyond the scope of this report. Next, we construct the Pair Approximation (PA) model. The PA utilizes ordinary differential equations, and is a deterministic representation of the CA. In the PA we incorporate some degree of spatial dependence, which yields insight into the spread of the Rasberry ant \cite{hiebeler2000populations}. It requires less computing power, but mathematical analysis can be difficult because of the size of the ODE system. Finally, we construct a Mean Field model (MF). Although the MF is not as accurate as the PA, the simplicity of the MF model allows for intricate analysis. The MF assumes spatial independence, and spatial correlations neither exist nor develop.



\section{Mathematical  Models}

\indent We define three different states in a homogeneously mixed grid: $[0],[1],[2]$. The state containing pesticide and no ants is $[0]$, the state containing neither ants nor pesticide (empty land) is $[1]$, and the state containing ants and no pesticide is $[2]$. Let $\mu$ be the rate of application of pesticide on ant colonies per week.  New areas are colonized by the ants at a rate of $\phi$ per week,and pesticide degrades at a rate of $\epsilon$ per week. The states and parameters are summarized in \ref{table:variables}. We assume only land containing ants is sprayed with pesticide and that the pesticide is 100\% effective against the Rasberry ant.

The MF approximation provides a simplified view of the spread of the Rasberry ants. Spatial independence in the MF approximation suggests that local interactions of ant colonies are neglegent in interpreting succesful growth and budding. This implication can be disadvantageous because many local environnmenal impediments to the spread of an ant colony are eliminate with such a broad perspective. In the CA approximation, we assume a von Neumann neighborhood where new colonies are being born in the four cardinal directions adjacent to the mother colony. The stochastic and spatially explicit features of the CA make it ideal for exploring the effect of the random behavior on the spread of the Rasberry ant.

\subsection{Cellular Automaton Model}
\indent The stochastic simulation of the spread of ants on a grid was done using a CA simulation.  The occurrence of
an event  is based on a Poisson process in which the time between any two sequential events is exponentially distributed. We construct an $n \times n$ lattice of cells in which each cell is at a certain state $[i]$, for $i = 0,1,2$. The lattice has periodic boundary conditions (i.e. torus). An example of such a lattice is shown in the \ref{lattice-example}.
\begin{figure}[h!]
\begin{center}
\caption{\footnotesize An example of a lattice. Here $P[0]$ = 4/25, $P[1]$ = 11/25, $P[2]$ = 10/25 and $P[12]$ = 22/25}
\includegraphics[scale=.35]{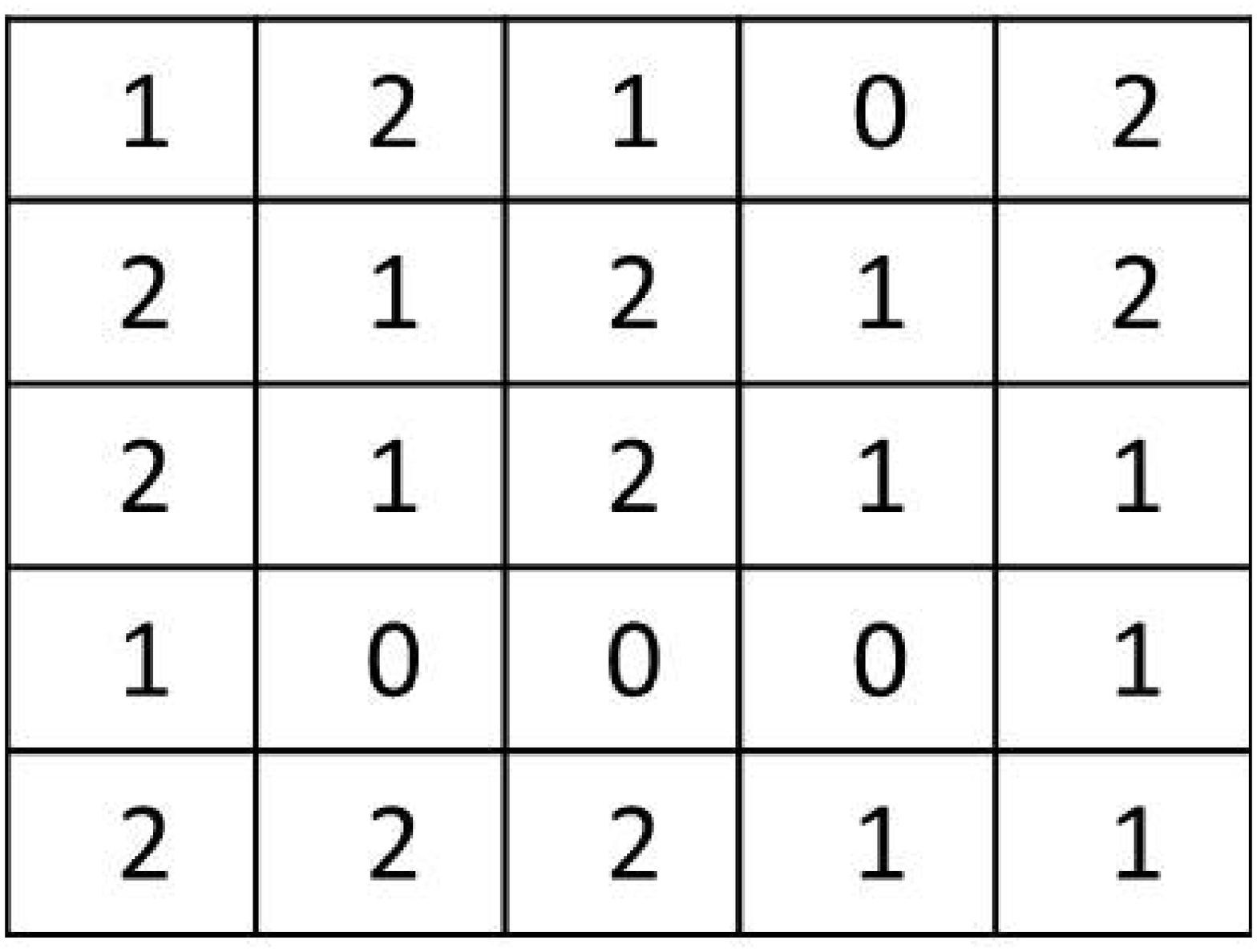}
\label{lattice-example}
\end{center}
\end{figure}
\noindent Before the random event process begins, a bookkeeping of the current state of all the cells is implemented and the result is represented by the indices of a matrix in our simulation. The only event that depends on both the cell acted on and its neighbors is colony birth. There is a count of all the $[12]$ pairs, which are necessary for the birth to occur.  A selection of three events based on the probability measure space created from the rate of each event and the total events rate. The event rates are defined in \ref{table:variables}.
%
%
%
\begin{table}[h!] 
\caption{\footnotesize Cellular Automaton State Variables and Event Rates }
\vspace {.5mm}
\centering 
\label{table:variables} 
\begin{tabularx}{0.83\textwidth}{ll}
\toprule
Variable/Event & Definition \\ 
\toprule
$S_{0}$ & Number of cells in state $[0]$ (Pesticide)\\ [0.5ex]
$S_{1} $ &Number of cells in state $[1]$ (Empty land ) \\ [0.5ex]	
$S_{2}$ & Number of cells in state $[2]$ (ant colony) \\ [0.5ex]
$S_{21}$ &Number of possible birth site pairs of state $[1]$ and $[2]$ \\ [0.5ex]
$\epsilon S_{0}$ & Degradation of pesticide \\[0.5 ex] 
$\mu S_{2}$ & Pesticide application \\[0.5 ex]
$ \frac{3}{4} \phi S_{21} $ & ant colony birth\\
$\epsilon S_{0}+\mu S_{2}+\frac{3}{4} \phi S_{21}$ & Any defined event occurs (total rate) \\
\bottomrule
\end{tabularx}
\end{table}

\begin{table}[h!]
\caption{\footnotesize Event Probabilities }
\label{table:eventprobs} 
\centering 
\begin{tabularx}{0.43 \textwidth}{ll}
\toprule
Probability  & Event \\ [0.5ex]
\toprule
$\frac{\epsilon S_{0}}{T_{R}}$ & Pesticide degradation\\ [0.5ex]
\vspace{0.4mm}
$\frac{\mu S_{2}}{T_{R}}$ & Pesticide application\\ [0.5ex]	
\vspace{0.2mm}
$\frac{\frac{3}{4} \phi S_{21}}{T_{R}}$ & ant colony budding \\ [0.5ex]
\bottomrule
\end{tabularx}
\end{table}

In terms of the stochasticity involved in the simulation, the event probabilities are calculated directly from the event rates and used to sample from the exponential distribution, with the total rate as the parameter. \ref{table:eventprobs} outlines the probability of each event occurring. We denote the total rate by $T_{R}$. The time between the last event is sampled from the exponential distribution as follows: 
 \begin{eqnarray*}
 t \sim  {\rm exp}\left({\left(T_{R}\right)t }\right).
\end{eqnarray*}
After each event the rates and probabilities are recalculated and a new event is randomly selected from the updated probability interval. The result is that the time between any two events is different,  as is to be expected from our Poisson waiting process assumption. 
An event is selected using a uniform number sample between 0 and 1 which falls somewhere on the probability interval constructed, as shown in Figure 2.
\begin{figure}[h!]
\begin{center}
\includegraphics[scale=.35]{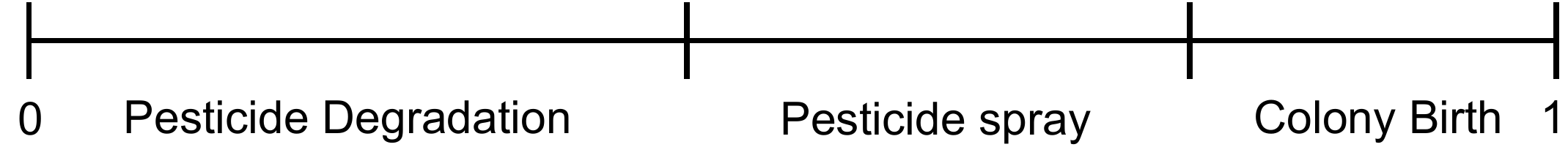}
\caption{\footnotesize  A generalization of a constructed probability interval}
\label{Measure Space}
\end{center}
\end{figure}
 A cell (matrix element) of the appropriate state is selected based on the event to occur.  For example, if a pesticide application is selected as an event, a cell in the ant colony state $[2]$ will be selected and it will be sprayed (i.e. changed to pesticide state $[0]$). The only event that can be "wasted" is a colony birth, in which there are no neighboring cells of state $[1]$ of a cell in state $[2]$ for the colony birth to occur. After each event, the change in a state is reflected in the matrix representing the lattice and there is a new bookkeeping of cell states for the current time step. The numerical results generated from the simulation are used for comparative analysis with  the MF and PA models  and will be discussed in the numerical analysis .  \vspace{3mm}

\noindent The following rules are implemented:

\begin{itemize}

\item \small {Events are chosen probabilistically from the rates and total rate}

\item \small {Pesticide degradation only depends on the random selection of a cell in state $[0]$}

\item\small {Pesticide application (spraying) only depends on the random selection of a cell in state $[2]$ which is changed to state $[0]$ }

\item\small {If a colony birth event is selected, then a cell in state $[2]$ is randomly selected  from its four cardinal neighbors as possible new sites for a birth. The first neighbor meeting this criteria will have its state changed to $[2]$. A colony birth event is wasted if there  are no neighbors of the selected state $[2]$ cell in state $[1]$ (i.e no empty land).}

\item\small { Waiting time between any two events is different because the program samples from the exponential distribution with a parameter which changes after each iteration.}
\end{itemize}

\noindent In \ref{BirthProcess} there is  an example of a birth process selected and completed successfully.
\begin{figure}[h!]
\centering
\includegraphics[width=0.90\textwidth,scale=.10]{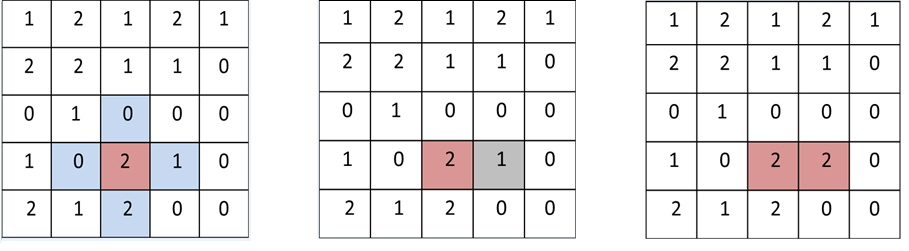}
\label{BirthProcess}
\caption{\footnotesize (a): A birth process is chosen and a cell in state $[2]$ is randomly selected. (b): Starting with the northern neighbor of state $[2]$, the program searches clockwise and selects the cell to the right in state $[1]$. (c): The cell in state $[1]$ receives a new ant colony and changes to $[2]$ }
\end{figure}


\subsection{Mean Field Model}
In the MF model, we assume spatial independence: the probability of finding any cell in a particular state is independent from its immediate neighbors, and only depends on the overall makeup of the lattice. These are the equations for the MF model, where
the terms are explained in table (\ref{table:MFStateVar}):

\begin{table}[!h]
\centering
\caption{\footnotesize State Variables in the MF Model}
\label{table:MFStateVar} 
\centering 
\begin{tabularx}{0.73\textwidth}{ll}
\toprule
Variable/Parameter  & Definition \\ [0.5ex]
\toprule
\\
$P[0]$ & Proportion of land at state 0 (Pesticide) \\ [0.5ex]
$P[1]$ & Proportion of land at state 1 (Empty land) \\ [0.5ex]	
$P[2]$& Proportion of land at state 2 (ant colony) \\ [0.5ex]

$\epsilon$ & Pesticide degradation rate \\[0.5 ex] 
$\mu$ & Pesticide application rate \\[0.5 ex]
$\phi$ &  Ant colony birth\\[0.5 ex]
\bottomrule
\end{tabularx}
\end{table}
\begin{eqnarray}
\frac{d}{dt}P[0]& = &\mu P[2] - \epsilon P[0] \label{Equation1}
 \\
\frac{d}{dt}P[1]& = &\epsilon P[0] - \phi P[1] P[2] \label{Equation2}
 \\
\frac{d}{dt}P[2]& = &\phi P[1] P[2] - \mu P[2] \label{Equation3}
\end{eqnarray}

\noindent The parameters $\mu$, $\epsilon$, and $\phi$ are positive constants, and ${P[0] + P[1] + P[2] = 1}$ by the law
of total probability.  Substitution for ${P[0] = 1- P[1] - P[2]}$  reduces our system of ODEs \ref{Equation1}--\ref{Equation3} to:
\begin{eqnarray}
\frac{d}{dt}P[1]& = &\epsilon \left(1 - P[1] - P[2]\right) -\phi P[1] P[2] \label{Equation4}
 \\
\frac{d}{dt}P[2]& = &\phi P[1] P[2] - \mu P[2] \label{Equation5}.
\end{eqnarray}

\noindent There are two equilibria points. The Ant Free Equilibrium (AFE) is at ${(P[1]^*,P[2]^*)=(1,0)}$, and the Ant Endemic Equilibrium (AEE) is $ {\left(\frac{\mu}{\phi},
\frac{\epsilon}{\mu + \epsilon}(1- \frac{\mu}{\phi}) \right)}$. Note that when $\mathscr{I}_0 < 1$, the $P[1]$ coordinate
of the AEE is greater than 1. Because the lattice proportions of our states must between 0 and 1 this is not a biologically relevant situation, and we do not consider the AEE when $\mathscr{I}_0<1$.

\subsubsection{MF Local Stability Analysis}
\begin{theorem}
\emph{(Local stability of AFE)}
If $\mathscr{I}_0 \equiv \frac{\phi}{\mu} < 1$, then the AFE is locally asymptotically stable. If $\mathscr{I}_0 > 1$, then
the AFE is a saddle point and unstable.
\end{theorem}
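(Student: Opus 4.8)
The plan is to use standard linearization (the Hartman--Grobman theorem) to reduce the stability question entirely to the sign structure of the eigenvalues of the Jacobian evaluated at the AFE. First I would compute the Jacobian $J$ of the reduced planar system \ref{Equation4}--\ref{Equation5}. Writing $f_1 = \epsilon(1 - P[1] - P[2]) - \phi P[1]P[2]$ and $f_2 = \phi P[1]P[2] - \mu P[2]$, the four entries are routine partial derivatives; the only mild bookkeeping is that $\partial f_1/\partial P[1] = -\epsilon - \phi P[2]$ and $\partial f_1/\partial P[2] = -\epsilon - \phi P[1]$ each pick up both an $\epsilon$ term and a $\phi$ term.

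Next I would evaluate $J$ at the AFE $(P[1]^*,P[2]^*)=(1,0)$. The key observation, and the feature that makes the whole argument clean, is that at this equilibrium the lower-left entry $\phi P[2]^*$ vanishes, so $J$ restricted to the AFE is upper triangular. Its eigenvalues can therefore be read directly off the diagonal as $\lambda_1 = -\epsilon$ and $\lambda_2 = \phi - \mu$, and no characteristic polynomial need be solved.

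The conclusion then follows by inspecting signs. Since $\epsilon > 0$ by assumption, $\lambda_1 < 0$ unconditionally. The sign of $\lambda_2 = \phi - \mu$ is governed entirely by $\mathscr{I}_0 = \phi/\mu$: when $\mathscr{I}_0 < 1$ we have $\phi < \mu$, so $\lambda_2 < 0$ and both eigenvalues are negative, giving local asymptotic stability; when $\mathscr{I}_0 > 1$ we have $\lambda_2 > 0$, so the eigenvalues carry opposite signs and the AFE is a saddle, hence unstable.

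There is essentially no hard step in the computation itself, as the argument is a textbook linearization. The only point requiring a word of justification is the appeal to Hartman--Grobman, which demands that the equilibrium be hyperbolic. This holds precisely because both eigenvalues have nonzero real part whenever $\mathscr{I}_0 \neq 1$, with $\epsilon > 0$ ruling out any degeneracy in $\lambda_1$. The borderline case $\mathscr{I}_0 = 1$, where $\lambda_2 = 0$, is exactly the transcritical bifurcation at which the AEE collides with the AFE, consistent with the earlier remark that the AEE leaves the biologically relevant region as $\mathscr{I}_0$ crosses $1$; it is excluded from the statement.
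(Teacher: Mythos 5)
Your proof is correct and follows essentially the same route as the paper: compute the Jacobian of the reduced system \ref{Equation4}--\ref{Equation5} at $(1,0)$, observe it is upper triangular with eigenvalues $-\epsilon$ and $\phi-\mu$, and read off stability from the sign of $\phi-\mu$ relative to $\mathscr{I}_0$. Your explicit mention of hyperbolicity and the Hartman--Grobman justification is a small addition the paper leaves implicit, but the argument is the same.
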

\begin{proof}
The Jacobian of system \ref{Equation4}--\ref{Equation5} at the AFE is
\begin{equation}
J(AFE) = \begin{bmatrix}
-\epsilon &-\epsilon -\phi \\
0 & \phi - \mu
\end{bmatrix}.
\end{equation}
\noindent This is an upper triangular matrix with eigenvalues $-\epsilon$ and $\phi-\mu$. From the second eigenvalue we define the {\it basic invasion number}
\[  \mathscr{I}_0 \equiv \frac{\phi}{\mu}.  \]
$\mathscr{I}_0$ measures the average number of colonies budded from a mother ant colony during its lifespan in land of mostly empty cells. If $\mathscr{I}_0 < 1$, our two eigenvalues are negative and the AFE is locally
asymtotically stable. In this instance, ant colonies are being sprayed faster than they are reproducing. Therefore, ants become extinct. 

If $\mathscr{I}_0>1$, one eigenvalue is positive and the AFE is a saddle point. The initial colonies are reproducing faster than they can be sprayed, and they will survive the initial colonization phase.
\end{proof}
\noindent Next, we prove the local stability of the AEE = $ {\left(\frac{\mu}{\phi}, \frac{\epsilon}{\mu + \epsilon}(1- \frac{\mu}{\phi}) \right)}$.T
\begin{theorem}
\emph{(Local stability of AEE)}
If $\mathscr{I}_0 > 1$, then the AEE is locally asymptotically stable.
\end{theorem}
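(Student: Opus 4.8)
The plan is to mirror the local-stability argument just used for the AFE: compute the Jacobian of the reduced system \ref{Equation4}--\ref{Equation5}, evaluate it at the AEE, and then certify that both eigenvalues have negative real part. Since the system is two-dimensional, I would avoid factoring the characteristic polynomial explicitly and instead invoke the trace--determinant (Routh--Hurwitz) criterion: for a $2\times 2$ matrix both eigenvalues lie in the open left half-plane if and only if the trace is negative and the determinant is positive.

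First I would write $x = P[1]$, $y = P[2]$ and record the four partial derivatives of $f_1 = \epsilon(1-x-y) - \phi x y$ and $f_2 = \phi x y - \mu y$. The crucial simplification comes from the first coordinate of the AEE, $x^* = \mu/\phi$, which makes $\phi x^* - \mu = 0$; consequently the $(2,2)$ entry of the Jacobian vanishes and the $(1,2)$ entry collapses to $-(\epsilon + \mu)$. The Jacobian at the AEE therefore takes the clean form
\[
J(\mathrm{AEE}) = \begin{bmatrix} -\epsilon - \phi y^* & -(\epsilon+\mu) \\ \phi y^* & 0 \end{bmatrix},
\qquad y^* = \frac{\epsilon}{\mu+\epsilon}\left(1 - \frac{\mu}{\phi}\right).
\]

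From here the conclusion is immediate once positivity of $y^*$ is established. The trace is $-\epsilon - \phi y^*$ and the determinant is $(\epsilon+\mu)\,\phi y^*$. The hypothesis $\mathscr{I}_0 = \phi/\mu > 1$ is exactly the statement $\mu/\phi < 1$, so $1 - \mu/\phi > 0$ and hence $y^* > 0$; all parameters being positive, this forces the trace to be strictly negative and the determinant strictly positive. By the trace--determinant criterion both eigenvalues have negative real part, so the AEE is locally asymptotically stable.

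I do not anticipate a genuine obstacle here, since the argument reduces to a short computation; the only point demanding care is bookkeeping, namely applying the substitution $x^* = \mu/\phi$ consistently so that the vanishing $(2,2)$ entry is not lost, and explicitly tying the sign of the determinant (and hence the stability conclusion) to the threshold condition through the positivity of $y^*$. It is worth remarking that $\mathscr{I}_0 > 1$ is precisely the threshold at which the previous theorem makes the AFE a saddle, and at $\mathscr{I}_0 = 1$ the AEE collides with the AFE; together the two results describe a transcritical exchange of stability at $\mathscr{I}_0 = 1$.
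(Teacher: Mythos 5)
Your proof is correct, and it follows the same overall strategy as the paper (linearize the reduced system at the AEE and show both eigenvalues of the Jacobian lie in the open left half-plane), but it certifies the eigenvalue condition by a genuinely different and cleaner device. The paper writes the eigenvalues out explicitly via the quadratic formula and then argues by cases on the sign of the radicand, bounding the radical from above by discarding the terms carrying the negative factor $(\mu-\phi)$; this works in principle, but the resulting expressions are long, and the paper's displayed Jacobian and radicand contain sign/typographical slips (the $(1,2)$ entry is printed as $\epsilon-\mu$ where the correct value is $-(\epsilon+\mu)$, which your computation gives correctly, and the closing sentence says ``AFE'' where ``AEE'' is meant). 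Your use of the trace--determinant form of the Routh--Hurwitz criterion for a $2\times 2$ matrix bypasses all of that: once you note that the $(2,2)$ entry vanishes because $\phi x^*-\mu=0$, the trace is $-\epsilon-\phi y^*<0$ and the determinant is $(\epsilon+\mu)\phi y^*>0$, with both signs tied directly to $y^*>0$, i.e.\ to $\mathscr{I}_0>1$. What the paper's route buys is the explicit eigenvalue formula (useful if one wants to classify the AEE as a spiral versus a node); what yours buys is a shorter, less error-prone argument in which the role of the threshold is transparent, and your closing remark about the transcritical exchange of stability with the AFE at $\mathscr{I}_0=1$ is a correct and worthwhile observation that the paper does not make explicit.
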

\begin{proof}
The Jacobian of system \ref{Equation4}--\ref{Equation5} at the AEE is:
\begin{equation}
J\left(\frac{\mu}{\phi},\frac{\epsilon}{\mu + \epsilon}(1- \frac{\mu}{\phi}) \right) = \begin{bmatrix}
-\epsilon  -\frac{\epsilon(\phi - \mu)}{\mu + \epsilon} & \epsilon-\mu \\
\frac{\epsilon (\phi - \mu)}{\mu + \epsilon} & 0
\end{bmatrix}
\end{equation}
with eigenvalues:
\[ \lambda_{\pm} =   \frac{-\epsilon^2 -\phi \epsilon \pm \sqrt{\epsilon^4 +2\epsilon^2(2\mu-\phi) + 8\epsilon^2\mu(\mu-\phi) + \epsilon^2 \phi^2 + 4 \epsilon\mu^2(\mu-\phi)}}{2(\mu + \epsilon)}.\]
$\lambda_-$ always has a negative real part because it is the difference between a negative real number and a radical term, which is either a non-negative real number or a pure imaginary number.
To  show that $\lambda_+$ has a negative real part, we look at two cases. First, we consider the case where the radicand is negative. Under
this condition, the radical term is purely imaginary and $\lambda_+$ is a complex number with a negative real part. If the radicand is non-negative, the radical term is a non-negative real number. Remember that when the AEE is biologically relevant,  $\mathscr{I}_0 > 1$ and $\phi > \mu$. Therefore we have that:
\[\sqrt{\epsilon^4 +2\epsilon^2(2\mu-\phi) + 8\epsilon^2\mu(\mu-\phi) + \epsilon^2 \phi^2 + 4 \epsilon\mu^2(\mu-\phi)}<\sqrt{\epsilon^4 +2\epsilon^2\phi + \epsilon^2 \phi^2}=\epsilon^2 + \phi\epsilon \nonumber.\]
Therefore,
\begin{equation}
\lambda_{+} < \frac{-\epsilon^2 - \phi \epsilon + (\epsilon^2 + \phi \epsilon)}{2(\mu + \epsilon)} = 0.
\end{equation}
since $\lambda_{\pm}$ both have negative real parts, the AFE is asymptotically stable.
\end{proof}
\subsubsection{MF Global Stability Analysis}

In this section we prove some theorems regarding global stability of our equilibria points. In $\mathbb{R}^2$, $P[1]$ is on our $x$
axis and $P[2]$ is on our $y$ axis. Let $\mathscr{D}$ be the closed region bounded by the triangle with vertices $(0,0)$, $(0,1)$, and $(1,0)$, as shown in figure \ref{MFmodelphaseplane}. 
\begin{figure}[h!]
\centering
\includegraphics[scale=0.6]{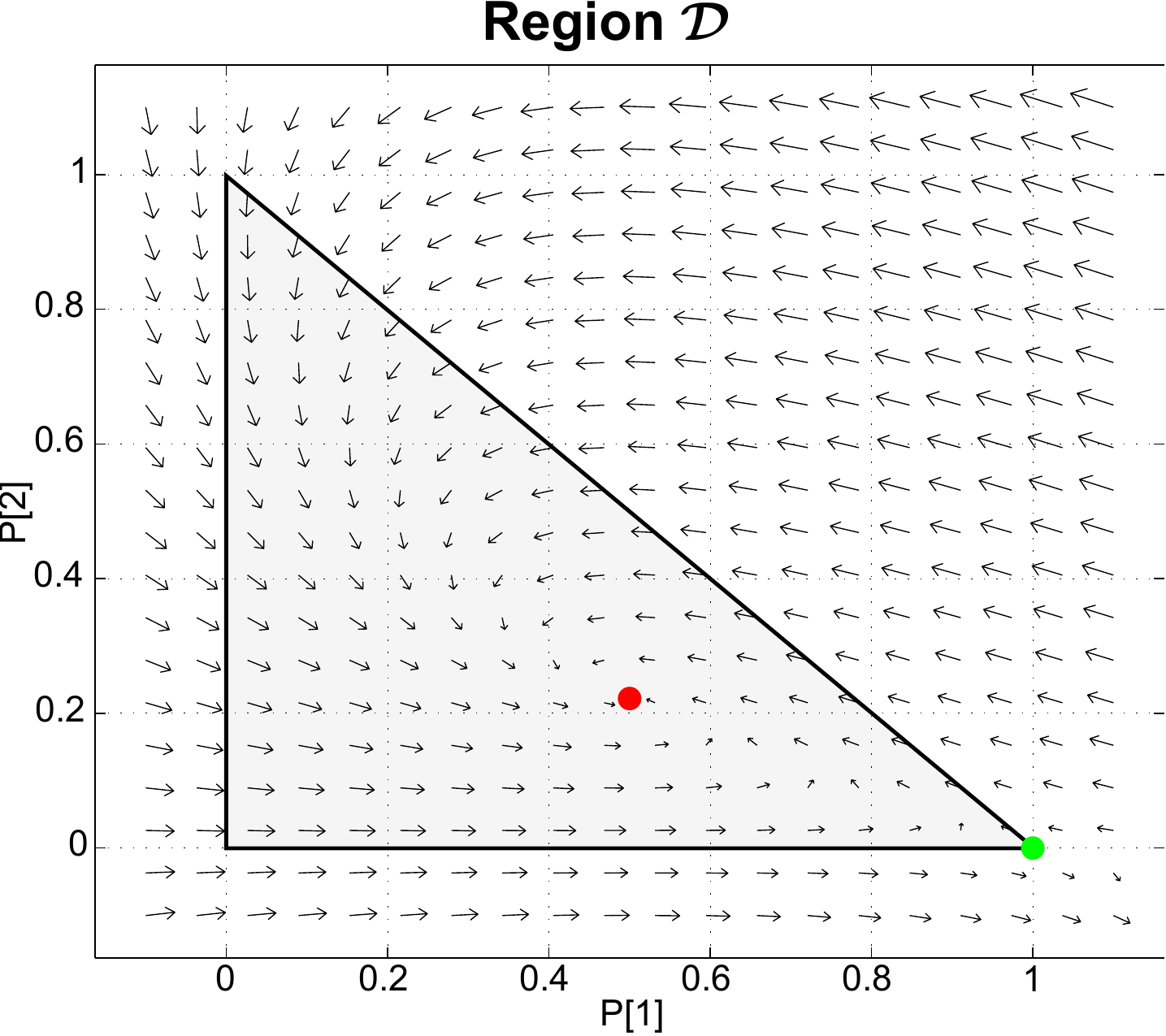}
\caption{\footnotesize The phase plane for our MF model is the triangle $\mathscr{D}$, with the AFE (green) and the AEE (red).}
\label{MFmodelphaseplane}
\end{figure}
Anything outside of  $\mathscr{D}$ is irrelevant because our state variables need to be positive and $P[1] + P[2] \leq 1$. Let $\partial\mathscr{D}$ denote the boundary of $\mathscr{D}$, and $\mathscr{D}^0$ denote the
interior of $\mathscr{D}$. We first prove the global stability of the AEE.
\begin{theorem}
\emph{(Global stability of the AEE)}
\label{globalAEE}
If $\mathscr{I}_0 > 1$, then the AEE is globally asymptotically stable on $\mathscr{D}^0$.
\end{theorem}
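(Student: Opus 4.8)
The plan is to establish positive invariance of the triangular region $\mathscr{D}$, use a Dulac criterion to exclude periodic orbits, and then appeal to the Poincar\'e--Bendixson theorem to force convergence to the AEE. Write the right-hand sides of \ref{Equation4}--\ref{Equation5} as $f(P[1],P[2]) = \epsilon(1-P[1]-P[2]) - \phi P[1]P[2]$ and $g(P[1],P[2]) = \phi P[1]P[2] - \mu P[2]$. First I would check the vector field along each edge of $\partial\mathscr{D}$: on the bottom edge $P[2]=0$ we have $g \equiv 0$, so that segment is invariant; on the left edge $P[1]=0$ we get $f = \epsilon(1-P[2]) \geq 0$, so the flow points inward; and on the hypotenuse $P[1]+P[2]=1$ the outward component satisfies $\frac{d}{dt}(P[1]+P[2]) = -\mu P[2] \leq 0$, again pointing inward. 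Hence any trajectory starting in $\mathscr{D}$ remains there for all forward time, and one starting in $\mathscr{D}^0$ stays in $\mathscr{D}^0$.

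The central step is to rule out closed orbits. I would apply the Bendixson--Dulac criterion with the Dulac function $B(P[1],P[2]) = 1/P[2]$, which is positive and smooth on $\mathscr{D}^0$. A direct computation yields
\begin{equation*}
\frac{\partial (Bf)}{\partial P[1]} + \frac{\partial (Bg)}{\partial P[2]} = -\frac{\epsilon}{P[2]} - \phi < 0
\end{equation*}
throughout $\mathscr{D}^0$. Since this quantity has one fixed sign, no periodic orbit can lie entirely in $\mathscr{D}^0$.

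With periodic orbits excluded and $\mathscr{D}$ compact and positively invariant, the Poincar\'e--Bendixson theorem forces the $\omega$-limit set of every interior trajectory to consist only of equilibria; being connected and meeting only the finite set $\{\mathrm{AFE},\mathrm{AEE}\}$, it must be a single equilibrium. It then remains to exclude the AFE. Since $\mathscr{I}_0 > 1$, Theorem 1 gives that the AFE is a saddle, and its stable eigenvector (for the eigenvalue $-\epsilon$) is $(1,0)$, so its stable manifold lies along the invariant edge $P[2]=0$ on $\partial\mathscr{D}$. By the stable manifold theorem, no interior trajectory can approach the AFE. Therefore every orbit in $\mathscr{D}^0$ converges to the AEE, which, combined with its local asymptotic stability from Theorem 2, gives global asymptotic stability on $\mathscr{D}^0$.

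The step I expect to be the main obstacle is the choice of Dulac function: taking $B=1$ leaves a divergence that changes sign, so the whole argument hinges on finding a multiplier making the divergence strictly signed. The factor $1/P[2]$ succeeds precisely because $g$ carries a factor of $P[2]$, which cancels the problematic nonlinear terms; recognizing this is the one genuinely creative move, whereas the invariance check and the exclusion of the boundary saddle are routine.
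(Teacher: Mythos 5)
Your proof follows essentially the same route as the paper's: positive invariance of $\mathscr{D}$ via the boundary vector field, the Dulac function $1/P[2]$ to exclude closed orbits, and Poincar\'e--Bendixson to conclude. Your divergence computation $-\epsilon/P[2]-\phi$ agrees with the paper's $-(\epsilon+\phi P[2])/P[2]$, and your explicit exclusion of the AFE via its stable manifold lying along the invariant edge $P[2]=0$ is in fact slightly more careful than the paper's bare appeal to the AEE being ``the only stable equilibrium point.''
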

\begin{proof} In our two dimensional dynamical system, trajectories can only flow in one of three ways. They can either diverge (the trajectory is unbounded), they can converge to a stable limit cycle, or they can flow to a stable equilibrium point. To show that no trajectories diverge, we show that $\mathscr{D}$ is positive invariant under equations ($ \ref{Equation4}-\ref{Equation5}$). Consider the vector flow of our equations on $\partial \mathscr{D}$.  On the bottom triangle side
$B_1=\{\partial \mathscr{D}: P[2] = 0$\}, we have
\begin{eqnarray*}
\frac{d}{dt}{P[1]} &=& \epsilon\left  ( 1-P[1] \right) \geq 0 
\\
\frac{d}{dt}{P[2]} &=& 0.
\end{eqnarray*}
$P[2]$ has a constant time derivative, so it remains zero for all time. It follows that the trajectory with initial conditions (0,0) flows to the AFE at (1,0) along $B_1$, and $B_1$ is a trajectory in $\mathscr{D}$. Therefore, no other trajectories in 
$\mathscr{D}$ can
intersect $B_1$. On the left side of the triangle $B_2 = \{\partial \mathscr{D} : P[1] = 0\}$, 
\begin{equation*}
\frac{d}{dt} P[1] = \epsilon P[0] \geq 0,
\end{equation*}
so the vector field on $B_2$ points to the right and into $\mathscr{D}$. Along the side ${B_3=\{\partial \mathscr{D} : P[1]+P[2]=1\}}$ our time derivatives are:
\begin{eqnarray*}
\frac{d}{dt} P[1 ]&=& -\phi P[2]\left(1-P[2] \right)
\\
\frac{d}{dt} P[2] &=&\phi  P[2]\left(1-P[2]-\frac{\mu}{\phi} \right).
\end{eqnarray*} 
If our vector field is to point into $\mathscr{D}$, we must have that $\frac{d}{dt} P[1] + \frac{d}{dt} P[2] \leq 0$. It can 
readily be seen that 
\[ \frac{d}{dt} P[1] + \frac{d}{dt} P[2] =- \mu P[2] \leq 0. \]
Since our vector flow on $\partial \mathscr{D}$ points towards $\mathscr{D}$, it is a positively invariant set, and no trajectories in $\mathscr{D}$ can diverge.

 Next, we use the Bendixson-Dulac theorem to show there are no limit cycles in $\mathscr{D}^0$ when $\mathscr{I}_0 > 1$. The function $G \left(P[2] \right)= \frac{1}{P[2]}$ is a continuously differentiable function in  $\mathscr{D}^0$
Reorganizing the equations in system (\ref{Equation4})--(\ref{Equation5})
\begin{eqnarray*}
 \frac{d}{dt} P[1] &=& \epsilon \left(1-P[2] \right) - P[1] \left( \epsilon +\phi P[2] \right) 
\\
\frac{d}{dt} P[2] &=& P[2]\phi \left(P[1]- \frac{\mu}{\phi}\right).
\end{eqnarray*}
Now, applying Dulac's criterion with the function $G \left(P[2]\right) $,
\begin{eqnarray}
& &\frac{\partial}{\partial P[1]}\left( G \left (P[2] \right)\frac{d}{dt} P[1] \right) + \frac{\partial}{\partial P[2]}\left( G \left(P[2] \right)\frac{d}{dt} P[2] \right) \label{DulacCriterion}
\\ \nonumber
&=&-\frac{ \left(\epsilon +\phi P[2] \right)}{P[2]} <0. \nonumber
\end{eqnarray}
\noindent Equation (\ref{DulacCriterion}) is strictly negative, so by Dulac's Criterion, $\mathscr{D}^0$ has no periodic orbits. We have eliminated the possibility of divergence and convergence to a limit cycle, so any trajectory  in $\mathscr{D}_0$ must flow to a steady equilibrium point. The AEE is the only stable equilibrium point, so it is
globally stable.
\end{proof}
\indent We next use La Salle's invariance principle to show that when $\mathscr{I}_0 < 1$, the AFE is globally stable.
\begin{lemma}
\emph{(La Salle's invariance principle)}
Let $\Omega \subset \mathscr{D}$ be a compact and positive invariant set with respect to the system $\dot{\bf{x}} = f(\bf{x})$, where $f:\mathbb{R}^n \to \mathbb{R}^n$ is the function of time derivatives.
Let $V: \mathscr{D} \to \mathbb{R}$ be continuously differentiable with $\dot{V}(\mathbf{x}) \leq 0$ over $\Omega$. Let
$E$ be the set of states for which $\dot{V}(\mathbf{x}) = 0$. Let $M$ be the maximal positive invariant set in $E$. Then every
solution with initial state in $\Omega$ asymptotically approaches the set $M$.
\end{lemma}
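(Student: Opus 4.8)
The plan is to prove La Salle's principle through the standard machinery of $\omega$-limit sets, using the hypotheses on $V$ to trap the limit set inside $E$. First I would fix an initial state $\mathbf{x}_0 \in \Omega$ and let $\mathbf{x}(t)$ denote the corresponding solution of $\dot{\mathbf{x}} = f(\mathbf{x})$. Since $\Omega$ is positively invariant, $\mathbf{x}(t) \in \Omega$ for all $t \geq 0$, and since $\Omega$ is compact the orbit is bounded and the solution exists for all forward time. Because $V$ is continuous on the compact set $\Omega$ it is bounded below there, and the hypothesis $\dot{V}(\mathbf{x}) \leq 0$ gives $\frac{d}{dt} V(\mathbf{x}(t)) = \dot{V}(\mathbf{x}(t)) \leq 0$, so $t \mapsto V(\mathbf{x}(t))$ is non-increasing and bounded below. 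Hence it converges to some limit $c \in \mathbb{R}$ as $t \to \infty$.

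Next I would introduce the $\omega$-limit set $\omega(\mathbf{x}_0)$, the set of all accumulation points of the orbit as $t \to \infty$. Because the orbit lies in the compact set $\Omega$, the standard theory of planar and higher-dimensional flows guarantees that $\omega(\mathbf{x}_0)$ is nonempty, compact, contained in $\Omega$, and invariant under the flow, and moreover that the orbit approaches it, i.e. $\mathrm{dist}(\mathbf{x}(t), \omega(\mathbf{x}_0)) \to 0$ as $t \to \infty$. I would invoke these as known facts about limit sets of bounded orbits rather than re-deriving them from scratch.

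The heart of the argument is then to locate $\omega(\mathbf{x}_0)$ inside $M$. By continuity of $V$, for any $\mathbf{y} \in \omega(\mathbf{x}_0)$ there is a sequence $t_n \to \infty$ with $\mathbf{x}(t_n) \to \mathbf{y}$, whence $V(\mathbf{y}) = \lim_n V(\mathbf{x}(t_n)) = c$; thus $V \equiv c$ on $\omega(\mathbf{x}_0)$. Since $\omega(\mathbf{x}_0)$ is invariant, any solution initialized in it remains in it, so $V$ is constantly equal to $c$ along that solution, forcing $\dot{V} = 0$ there. Hence $\omega(\mathbf{x}_0) \subseteq E$. But $\omega(\mathbf{x}_0)$ is itself invariant and contained in $E$, while $M$ is by definition the maximal invariant set contained in $E$; therefore $\omega(\mathbf{x}_0) \subseteq M$. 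Combining this with $\mathrm{dist}(\mathbf{x}(t), \omega(\mathbf{x}_0)) \to 0$ yields $\mathrm{dist}(\mathbf{x}(t), M) \to 0$, which is exactly the asserted asymptotic approach to $M$.

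I expect the \textbf{main obstacle} to be the topological input about $\omega$-limit sets, and in particular their invariance, which rests on continuous dependence and uniqueness of solutions together with a compactness (diagonal-sequence) argument. In a fully self-contained treatment this is the step demanding the most care; in the present context I would cite it as a standard result. The only other subtlety is the implication that $V$ constant on the invariant set $\omega(\mathbf{x}_0)$ forces $\dot{V} = 0$ on it, which genuinely uses invariance so that $V$ can be differentiated along orbits that stay within $\omega(\mathbf{x}_0)$.
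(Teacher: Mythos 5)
Your argument is the standard $\omega$-limit-set proof of La Salle's invariance principle (as in Khalil's \emph{Nonlinear Systems}) and it is correct: positive invariance plus compactness of $\Omega$ gives a bounded forward orbit, monotonicity and boundedness of $V$ along the orbit give convergence of $V$ to a constant $c$, continuity forces $V\equiv c$ on the $\omega$-limit set, invariance of that set then forces $\dot V=0$ there, and maximality of $M$ in $E$ finishes the containment. Note, however, that the paper offers no proof of this lemma at all --- it is stated purely as a citation of a classical result and then applied with $V(P[1],P[2])=P[2]$ to obtain global stability of the AFE --- so there is no in-paper argument to compare against; your write-up simply supplies the standard justification the authors chose to take for granted. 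The two points you flag as delicate (invariance of $\omega$-limit sets, and the step where constancy of $V$ on an invariant set yields $\dot V=0$) are indeed the only places where real work hides, and both are handled correctly.
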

\begin{theorem}
\emph{(Global stability of the AFE)}
If $\mathscr{I_0} < 1$, then the AFE is globally stable on $\mathscr{D}$.
\end{theorem}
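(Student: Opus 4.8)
The plan is to apply La Salle's invariance principle on the set $\Omega = \mathscr{D}$, reusing the positive invariance of $\mathscr{D}$ established in the proof of Theorem \ref{globalAEE}. The vector-field computation on $\partial\mathscr{D}$ carried out there does not depend on the sign of $\mathscr{I}_0 - 1$ (on $B_3$ one still has $\tfrac{d}{dt}P[1] + \tfrac{d}{dt}P[2] = -\mu P[2] \le 0$, and the arguments on $B_1$, $B_2$ are unchanged), so $\mathscr{D}$ remains positively invariant when $\mathscr{I}_0 < 1$. Since $\mathscr{D}$ is closed and bounded it is compact, meeting the hypotheses of the Lemma. For the Lyapunov candidate I would take the simplest function vanishing exactly on the ant-free edge, namely $V(P[1],P[2]) = P[2]$, which is continuously differentiable and nonnegative on $\mathscr{D}$.

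Next I would compute the orbital derivative along (\ref{Equation4})--(\ref{Equation5}):
\[ \dot V = \frac{d}{dt}P[2] = \phi P[1]P[2] - \mu P[2] = P[2]\bigl(\phi P[1] - \mu\bigr). \]
On $\mathscr{D}$ we have $0 \le P[1] \le 1$, and $\mathscr{I}_0 < 1$ means $\phi < \mu$; hence $\phi P[1] - \mu \le \phi - \mu < 0$. Since $P[2] \ge 0$ on $\mathscr{D}$, this forces $\dot V \le 0$ throughout $\mathscr{D}$, with equality precisely when $P[2] = 0$. Thus the set $E = \{\,\mathbf{x}\in\mathscr{D} : \dot V(\mathbf{x}) = 0\,\}$ is exactly the bottom edge $B_1 = \{P[2] = 0\}$.

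The decisive step is to show that the maximal invariant set $M \subseteq E$ collapses to the single point AFE $=(1,0)$. Restricting the system to $E$ gives $\dot P[1] = \epsilon(1-P[1])$ and $\dot P[2] = 0$, whose only equilibrium on $[0,1]\times\{0\}$ is $(1,0)$; every other point of $B_1$ lies on a trajectory that, traced backward in time, leaves $[0,1]$ and hence cannot belong to a complete trajectory contained in $E$. So the only invariant subset of $E$ is $M = \{\mathrm{AFE}\}$. By La Salle's invariance principle every solution starting in $\mathscr{D}$ approaches $M$, and since the AFE is the unique equilibrium in $\mathscr{D}$ when $\mathscr{I}_0 < 1$ (the AEE having left $\mathscr{D}$, its $P[1]$ coordinate exceeding $1$), this yields global asymptotic stability of the AFE on $\mathscr{D}$.

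I expect the main obstacle to be precisely this last reduction: because $B_1$ is itself positively invariant, a careless application of the Lemma would only conclude convergence to the whole edge rather than to the point $(1,0)$. The remedy is to invoke the \emph{invariance} (not merely positive invariance) of the limiting set --- equivalently, that the $\omega$-limit set of any trajectory is fully invariant --- so that only the equilibrium $(1,0)$ survives inside $E$. The remaining computations (positive invariance of $\mathscr{D}$ and the sign of $\dot V$) are routine given the earlier results.
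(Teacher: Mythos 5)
Your proof is correct and follows essentially the same route as the paper: La Salle's invariance principle on the compact, positively invariant set $\mathscr{D}$ with the Lyapunov function $V(P[1],P[2]) = P[2]$, whose orbital derivative $P[2](\phi P[1]-\mu)$ is nonpositive when $\mathscr{I}_0<1$. You are in fact more careful than the paper at the one delicate step: the paper asserts that the maximal \emph{positively} invariant subset of $B_1$ is the AFE, which is not right since all of $B_1$ is positively invariant; your observation that La Salle requires the maximal (two-sided) invariant set, and that the backward trajectory through any non-equilibrium point of $B_1$ exits $[0,1]\times\{0\}$, is exactly what is needed to collapse $M$ to the single point $(1,0)$.
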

\begin{proof}

We have already proved that $\mathscr{D}$ is positive invariant, and by the Hine-Borel theorem, it is compact because it is
closed and bounded.
Let  $V \left(P[1],P[2] \right) = P[2]$. We have that for $V$, 
\begin{eqnarray*}
V&\geq& 0 \qquad \mbox{when } (P[1],P[2])\in\mathscr{D}
\\
V  &=& 0 \qquad \mbox{when } P[2] = 0.
\\
\frac{d}{dt} V &=&  P[2]\phi \left(P[1]- \frac{\mu}{\phi} \right) < 0
 \end{eqnarray*}
where the last inequality holds because $\frac{\mu}{\phi} > 1$ when $\mathscr{I}_0 <1$. The bottom boundary edge $B_1$
where $V = 0$ is positively invariant, and the maximally positive invariant subset of $B_1$ is the AFE. From La Salle's
invariance principle, every trajectory starting in $\mathscr{D}$ flows to the AFE, and it is globally stable.
\end{proof}
\subsection{Pair Approximation Model}
In this section we explicitly derive the PA equations from \textit{a priori} assumptions about ant colonization. We start with
an infinite lattice $\mathscr{L}$ with cells in states [0] (pesticide), [1] (empty), and [2] (ants), with no boundary and make the following assumptions:
\begin{itemize}
\item Pesticide application and degredation of a cell is density independent and does not depend on any neighboring cell.
\item Only one cell can change its state at a time.
\item New ant colonies come from a single neighboring colony, and every new ant colony needs an empty cell (state [1]) to occupy.
\end{itemize}
 Let $[ij]$ be an adjacent pair of cells ordered left to right (or
north to south) in states $i$ and $j$. Given any random pair of adjacent cells on our grid, let $P[ij]$ be the probability that cells $[ij]$ are next to each other. By the symmetry of $\mathscr{L}$, $P[ij] = P[ji]$.
The term  $P[ij]$ is referred to as a state variable. All possible states are listed below: 
\begin{equation*}
\begin{matrix}
[00] & [01] & [02] \\
[10] & [11] & [12] \\
[20] & [21] & [22].
\end{matrix}
\end{equation*}
A von Neumann neighborhood will be used for the PA model; each cell is only in contact with cells in the 4 cardinal directions north, south, east, and west, as shown in figure \ref{Cross Diagram}
\begin{figure}[h!]
\centering
\includegraphics[width=2.0\textwidth,trim = 30mm 130mm 20mm 70mm, clip, width=10cm]{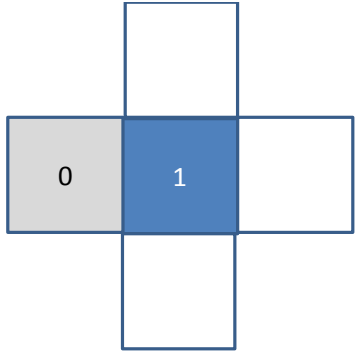}
\caption{\footnotesize A site of a $[21]$ pair and Von Neumann neighborhood around $[1]$.}
\label{Cross Diagram}
\end{figure}
\subsubsection{Development of the PA Model}
The rate of change of $P[ij]$ is the the rate of all states changing to $[ij]$ minus the rate of $[ij]$ changing. Written as a differential equation,
\[\frac{d}{dt}P[ij] = \sum inflow - \sum outflow\]
As an example, we derive the differential equation for the variable $P[01]$. From our assumption that only one cell can change
at a time, the only states that can change to state [01] are states [00] and [21]. The full set of transition rates are shown in figure (\ref{padevelop}) Cells in states [00] change to state [01] at a constant rate of $\epsilon$, and state [21] change to [01] at a rate of $\mu$. The [01] state can change to state [11] at a rate $\epsilon$, and can change to state [02] at a rate $\frac 34 \phi Q_{2|1}$, where 
\[Q_{2|1} =  \frac{P[21]}{P[01] + P[11] + P[12]}\]
 is the conditional probablity of finding a [21] state pair given that we have a cell in state [1].
\begin{figure}
\centering
\includegraphics[width=2.0\textwidth,trim = 70mm 70mm 70mm 70mm, width=5cm,angle=180] {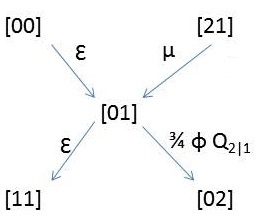}
\caption{The inflows and outflows of state [01].}
\label{padevelop}
\end{figure}
 Putting these rates together,
\begin{equation}
\frac{d}{dt}P[01] = \epsilon P[0 0] +\mu P[1 2] - P[01] \left(\epsilon + \frac{3}{4}\phi Q_{2|1} \right),
\end{equation}
Table \ref{table:variables}  lists all the state variables for the approximation model and the state change parameters.\\
\begin{table}[h!]
\caption{\footnotesize PA Model State Variable Definitions}
\label{table:variables} 
\centering 
\begin{tabularx}{\textwidth}{>{} lX}
\toprule
Variable/Parameter  & Meaning \\ [0.5ex]
\toprule
\\
$P[00]$ & Proportion of pairs of adjacent cells at state 0 (pesticide) \\ [0.5ex]
$P[01] \hskip.2cm \&  P[10]$ & Proportion of pairs at state 1 and state 0 (empty land \& pesticide) \\ [0.5ex]	
$P[11]$ & Proportion of pairs at state 1 (empty land) \\ [0.5ex]
$P[02] \hskip.2cm \& P[20]$ & Proportion of pairs at state 0 and state 2 (pesticide \& ant colony) \\ [0.5ex]
$P[12] \hskip.2cm \& P[21]$ & Proportion of pairs at state 1 and state 2 (empty land \& ant colony) \\[0.5ex]
$P[22]$ & Proportion of pairs at state 2 (ant colony) \\ [0.5ex]
$\epsilon$ & Pesticide degradation rate \\[0.5 ex] 
$\mu$ &  Pesticide application \\[0.5 ex]
$\phi$ & Ant colony birth rate\\[0.5 ex]
\bottomrule
\end{tabularx}
\end{table}
\noindent By the law of total probability,
\[ \sum^2_{j=0} \sum^2_{i=0} P[ij] = 1. \]
This relationship can be used to eliminate one of the state variables $P[ij]$. We elmininated the variable $P[00]$. Following a similar process for remaining variables, the PA model is given by
\begin{eqnarray*}
\frac{d}{dt}P[0 1] & = &\epsilon\big( 1 - 2P[01] - 2P[02] - 2P[12] -P[11] - P[22]\big) +\mu P[1 2] - P[01] \left(\epsilon + \frac{3}{4}\phi Q_{2|1} \right)
 \\
\frac{d}{dt}P[0 2 ] &= & P[01] \frac{3}{4}\phi Q_{2|1} + \mu P[2 2]- \mu P[02] -\epsilon P[02]
 \\
\frac{d}{dft}P[1 1] & = & 2\epsilon P[01]  - 2 P[1 1] \left(\frac{3}{4}\phi Q_{2|1} \right) 
\\
\frac{d}{dt}P[12] & = & \epsilon P[02]+P[11] \left(\frac{3}{4}\phi Q_{2|1} \right)-P[12] \left(\frac{3}{4}\phi Q_{2|1}+\frac{\phi}{4}+\mu \right)
 \\
 \frac{d}{dt}P[2 2] &=& 2P[1 2] \left( \frac{3}{4}\phi Q_{2|1} + \frac{\phi}{4} \right) - 2\mu P[2 2]
\end{eqnarray*}
with
\begin{equation}
P[00] = 1 - 2P[01] - 2P[02] - 2P[12] -P[11] - P[22]
\end{equation}
At the AFE for the PA model, no cells are in states [i0] or
[i2], thus we have:
\begin{equation}
\mbox{AFE} = (P[01]^*, P[02]^*, P[11]^*, P[12]^*, P[22]^*) = (0,0,1,0,0)
\end{equation}
\subsubsection{PA Local Stability Analysis}
We use a next generation operator to calculate the basic invasion number  $\bar{\mathscr{I}}_0$ for the PA model, and to analyze the local stability of the AFE. The vector X of infected classes is 
\begin{gather}
X =  \begin{bmatrix}
P[02]\\
P[12]\\
P[22]
\end{bmatrix}
\end{gather}

\noindent Using X, we have the next generation matrix $FV^{-1}$:
\begin{equation}
FV^{-1}=\begin{bmatrix}
 0 & 0  & 0 \\
\\
 \frac{3 \phi \epsilon}{\mu (\phi + 4 \mu +4\epsilon)} & 3\phi\frac{(\mu  +\epsilon)}{\mu (\phi + 4 \mu + 4\epsilon)} &\frac{3}{2} \frac{\phi \epsilon}{\mu (\phi + 4 \mu+ 4\epsilon)} \\
\\
 0 & 0 & 0
\end{bmatrix}
\end{equation}
The basic invasion number $\bar{\mathscr{I}}_0$ is the spectral radius of $FV^{-1}$, given by
\begin{equation}
\bar{\mathscr{I}}_0 = 3\phi\frac{(\mu  +\epsilon)}{\mu (\phi + 4 \mu + 4\epsilon)}.
\end{equation}
Furthermore, note
\begin{equation}
\bar{\mathscr{I}}_0 = 3\frac{\phi}{\mu}\frac{(\mu  +\epsilon)}{(\phi + 4 \mu + 4\epsilon)} = \left(  \frac{3}{\frac{\phi}{4(\epsilon + \mu)}+4}\right) \mathscr{I}_0 < \mathscr{I}_0.
\end{equation}

\noindent  $\bar{\mathscr{I}}_0$ is the average number of secondary colonies that a single initial colony will bud into before an application of pesticide kills it. Therefore, it is an indicator of the survival of ant colonies. Because $\bar{\mathscr{I}}_0 < \mathscr{I}_0$, our models predict that the initial ant population will
fare worse in the PA model than it will in the MF model. This is to be expected because the PA model incorporates spatial
dependence. There are two processes that are hindering ant growth. Not only does the pesticide
kill the ant at a rate $\mu$, but the ants cannot recolonize a cell that has been sprayed by pesticide until the pesticide degrades,
which happens at a rate $\epsilon$.
On the other hand, the MF model assumes spatial independence: every cell's behavior is independent of the state of its neighbors.
This can be observed in the lack of dependence of $\epsilon$ and $\mathscr{I}_0$. Although an ant cell can be directly sprayed by a pesticide, 
spatial independence assumes that colonizing colonies are not aware of whether or not an adjacent cell has been sprayed with pesticide.

A deeper biological understanding of the invasion number $\bar{\mathscr{I}}_0$ can be obtained by writing $\bar{\mathscr{I}}_0$ as following:
\begin{equation}
\bar{\mathscr{I}}_0 = \frac 34 \mathscr{I}_0 \left(\frac{\mu + \epsilon}{\frac{\phi}4 + \mu +\epsilon} \right)
\end{equation}
The first term $\frac 34 \mathscr{I}_0$ represents the ant's limited colonizing abilities in a von Neumann neighborhood. An ant colony that was created from a previous ant colony only has 3 out of the 4 surrounding cells to colonize, because the previous ant colony is still in a neighboring cell. Since every ant colony comes from a previous one, the ants can only colonize at at rate $3/4$ of the MF model. The term $(\mu + \epsilon)/(\frac{\phi}{4} + \mu + \epsilon)$ which we denote as the discount factor is a term that's always less than 1, and it represents the tendency for a given cell
to change out of states 0 and 2. It is the rate of pesticide application and degradation divided by the total rates of all events.
The ants need surrounding cells in the empty land state in order to create new colonies, so when the turnover rates of cells in states 0 and 1 are high (such as when $\epsilon$ is very large, and 0 states become 1 at a high rate), the discount  factor approaches 1 and the ants have an easier time invading new lands.
By solving for $\phi$, we can prove that $\bar{\mathscr{I}}_0 > 1$ if and only if
\begin{equation}
\phi > \frac{4\mu(\mu+\epsilon)}{2\mu+3\epsilon} \label{ineq}.
\end{equation}
Taking the partial derivative of the right hand side of (18) with respect to $\epsilon$, we have 
\begin{equation}
\frac{\partial}{\partial\epsilon}\frac{4\mu (\mu+\epsilon)}{2\mu+3\epsilon} = 
-\frac{4\mu^2}{(2\mu + 3\epsilon)^2}.
\end{equation}
The right hand side of inequality (\ref{ineq}) will decrease as the degradation rate of the pesticide $\epsilon$ increases. Biologically speaking, the ant colonies will better thrive as $\epsilon$ increases because cells
with pesticide will degrade faster into habitable cells, which allows the ants to recolonize those areas. The right hand side of (\ref{ineq}) reaches a maximum as $\epsilon$ approaches 0, in which case
\begin{equation}
\lim_{\epsilon \to 0} \frac{4\mu (\mu+\epsilon)}{2\mu+3\epsilon} = 2\mu.
\end{equation}
In other words, if $\phi \geq 2\mu$, then
\begin{equation}
\frac{4\mu (\mu+\epsilon)}{2\mu+3\epsilon}< 2\mu \leq \phi 
\end{equation}
and $\bar{\mathscr{I}}_0 > 1$. If the ants are colonizing at a rate twice as fast as the rate of pesticide application then the initial
colony invasion will persist if $\epsilon$ is strictly greater than $0$. Thus, the ant colonies will still find it easy to infiltrate an environment in which pesticide is being applied but eventually degrades.
The right hand side of (\ref{ineq}) reaches an asymptotic minimum as $\epsilon$ approaches infinity, in which case
\begin{equation}
\lim_{\epsilon \to \infty} \frac{4\mu (\mu+\epsilon)}{2\mu+3\epsilon} = \frac {4\mu}3.
\end{equation}
Therefore, if 
\begin{equation}
\phi <  \frac{4\mu}3,\qquad 
\end{equation}
then
\begin{equation}
 \qquad \phi <\frac{4\mu (\mu+\epsilon)}{2\mu+3\epsilon}
\end{equation}
and $\bar{\mathscr{I}}_0 < 1$.

\begin{eqnarray*}
\bar{{\cal I}}_0 =\left(  \frac{\mu + \epsilon}{\frac{\phi}{4} + \mu + \epsilon} \right) \left( \frac{\frac34 \phi}{\mu} \right)
\end{eqnarray*}
\begin{table} 
\caption{Nonlinear Model Results}   
\centering 							
\begin{tabular}{l l c}					
\toprule 
Limits & Biological Interpretation & Good or Bad for ant? \\
\midrule 
$\epsilon \to 0 \Rightarrow$ $\bar{{\mathscr{I}}}_0  \to \left( \frac {1}{{1} + \frac{1}{4}(\frac {\phi}{\mu})} \right) \left( \frac{\frac34 \phi}{\mu} \right) $&  Pesticide never degrades & Bad \\    \\
$\epsilon \to \infty \Rightarrow$ $ \bar{{\cal I}}_0  \to\frac{\frac34 \phi}{\mu} $ & Pesticide is ineffective & Good  \\ \\
$\mu \to 0 \Rightarrow$ $\bar{{\cal I}}_0  \to \infty $ & No pesticide application & Very Good \\ \\
$\mu \to \infty \Rightarrow$$\bar{{\cal I}}_0  \to 0$ & Pesticide applied continuously & Very Bad \\
\bottomrule                                                               
\end{tabular}
\label{table:nonlin}                                          
\end{table}
\indent Table \ref{table:nonlin} shows  $\bar{{\mathscr{I}}}_0$  rewritten in a form that makes it easier to analyze the qualitative results of varying parameters. As $\epsilon$ approaches zero the basic invasion number reaches a minimum which will make it difficult for the ants to spread because the number of cells with pesticide will continue to increase and saturate the remaining cells. Biologically, this means that pesticide which is applied remains effective indefinitely. As $\epsilon \rightarrow \infty$ , $\bar{{\mathscr{I}}}_0$ now reaches a point where the spread of the ants is dependent on the von Neumann neighborhood and the likelihood that one of its neighbors will transmit a colony. The pesticide is now decomposing so quickly that it has no temporal significance in slowing down the ants. If we let $\mu \rightarrow 0$ the basic invasion number increases without bound and the ants will now effectively colonize unperturbed, meaning that there is no pesticide application at all. On the contrary, if we let $\mu \rightarrow \infty$ the ants will be sprayed so rapidly that their budding rate will play no part in their invasion success. 
The feasibility of any pesticide application regiment depends on the environmental tolerance for pesticide toxicity and the resources available.


\section{Numerical Results}
In this section we numerically approximate our system of differential equations for the MF model and the PA model. We used the sub-function $\it ode \,45$ in MATLAB 2013a . Then we plot the proportion of land with pesticide, the proportion of habitable land, and the proportion of land occupied by ants with respect to time. We compare the results from these two models with the CA model. For our first set of graphs, we generate initial conditions by randomly distributing initial states on a grid using a weighted probability. The odds of a cell being initialized as a 1 is 99\%, and the odds of a cell being initialized as a 2 is 1\%.

\begin{figure}[h!]
\centering
(a)\includegraphics[scale=0.35]{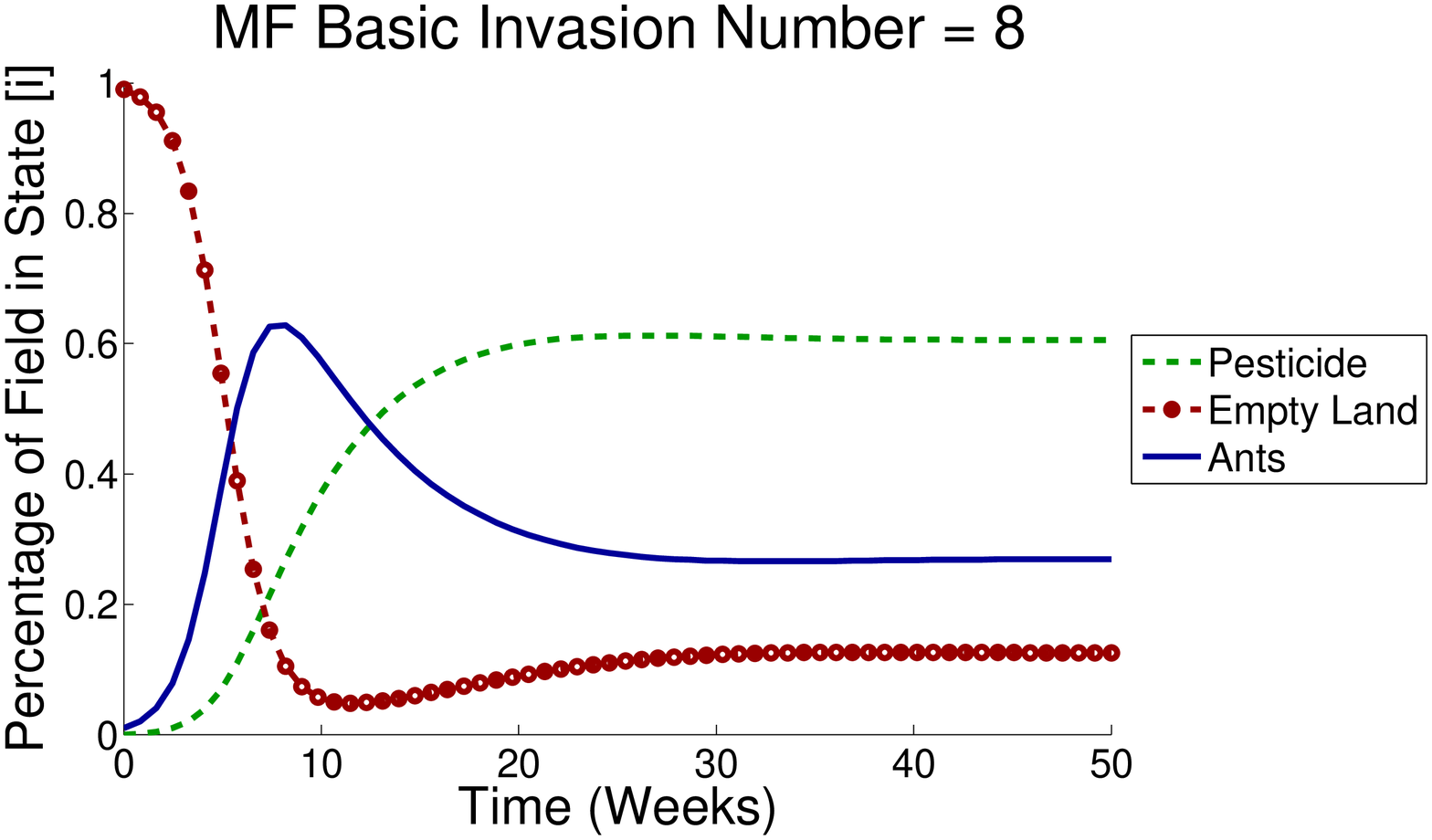}
\centering
(b)\includegraphics[scale=0.35]{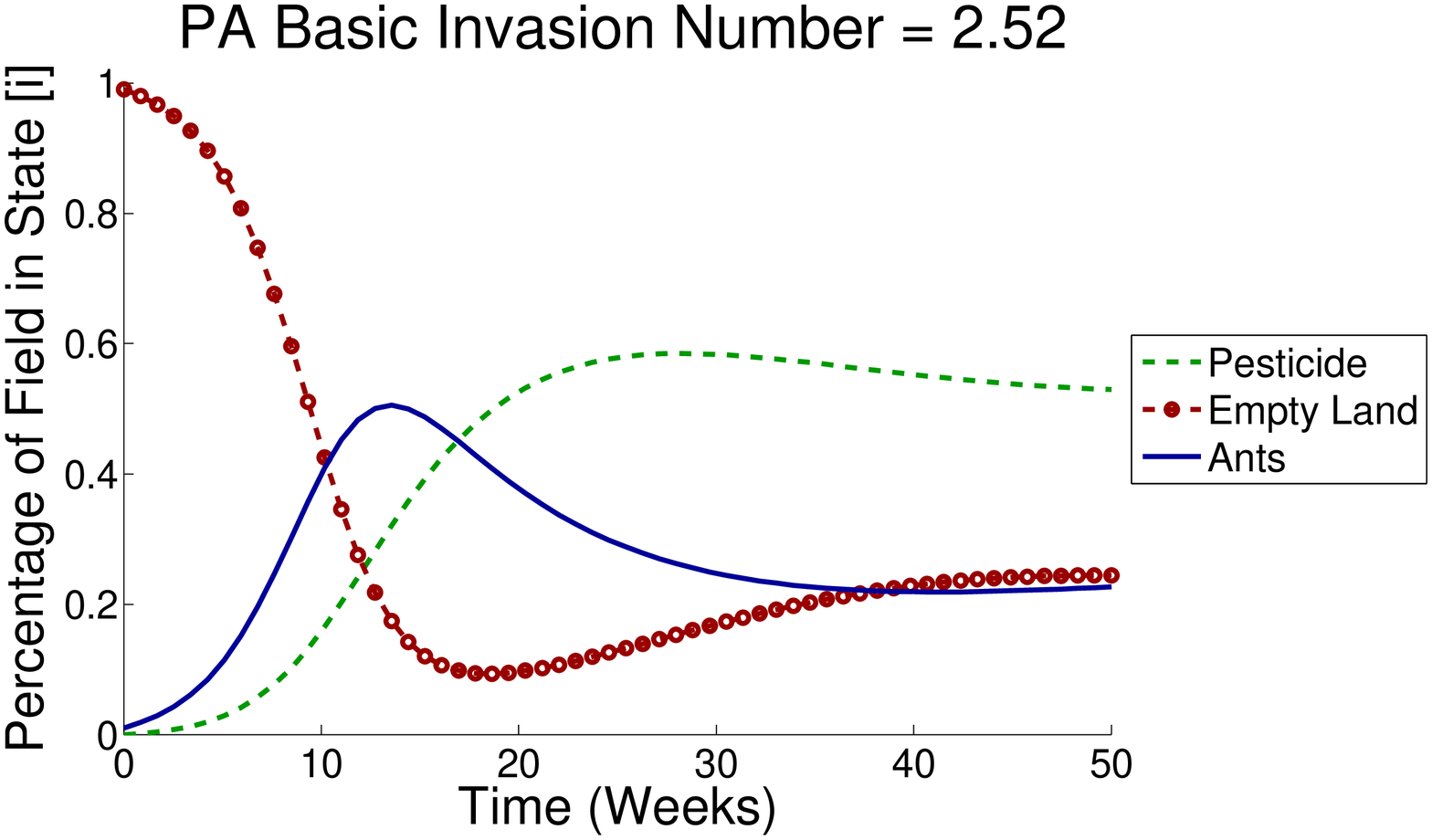}
\centering
(c)\includegraphics[scale=0.35]{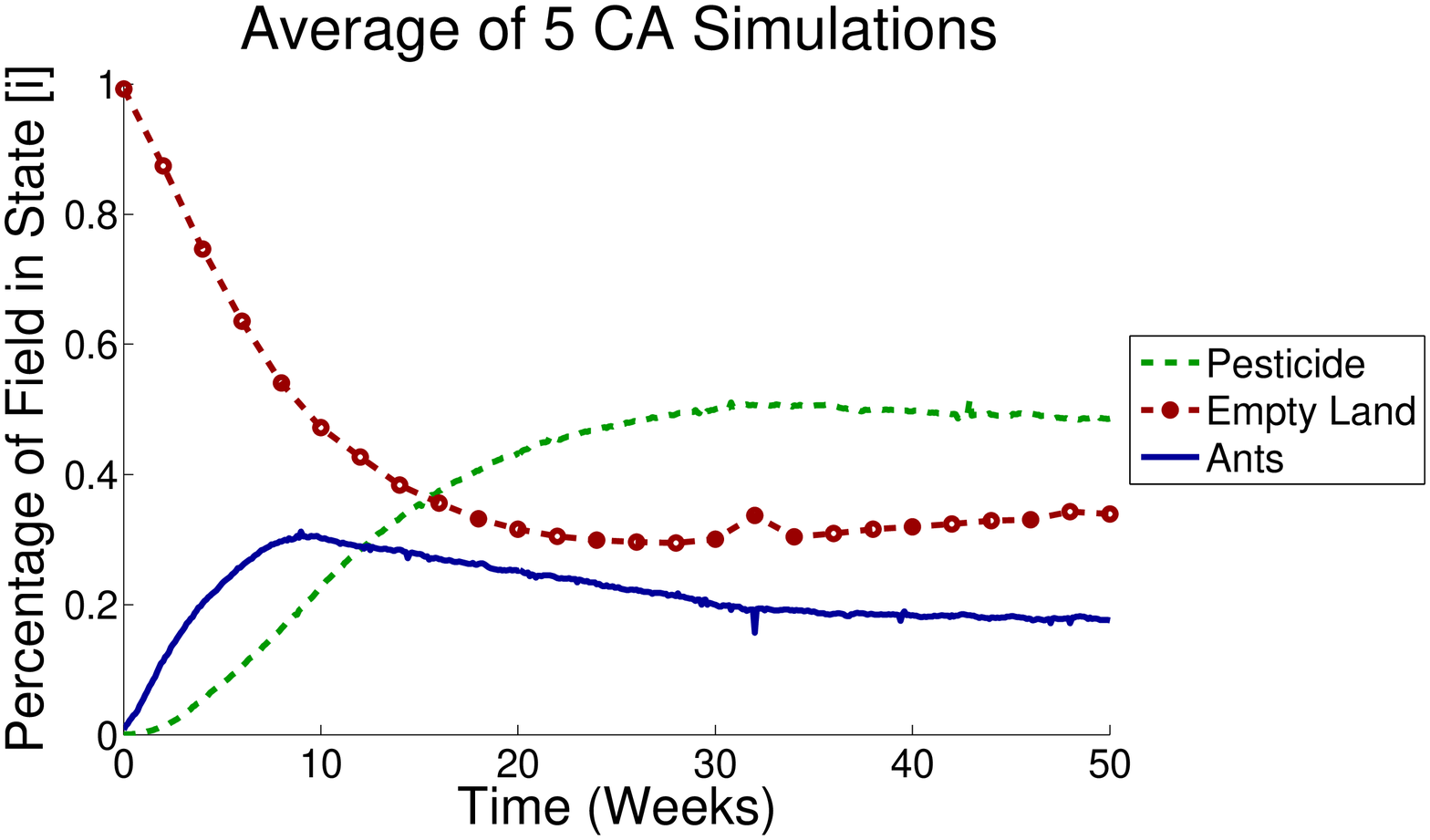}
\caption{\footnotesize (a) MF model solutions. (b) PA model solutions. (c) CA model solutions.\newline 
\centerline{$\phi=1, \mu = 1/8$, $\epsilon = 1/8$}}
\label{PlotAEE}
\end{figure}

From the graphs in Figure~\ref{PlotAEE}, the ants fare best in the MF model, followed by the PA model, followed by the CA model. The MF model assumes spatial independence of the cells; all cells are homogeneously mixed, and the ant colony survival depends on the relationship between $\mu$ and $\phi$. The PA model is also homogeneously mixed, but does not assume spatial independence.  The ants fare the worst in the CA model. In the CA model, explicit spatial dependence is taken into account. There are also possibilities of wasted birth, which cause the ants to fare the worst in the CA model. Figure~\ref{CASimu} shows a visualization of the simulation at the endemic equilibrium with the same initial conditions as the CA in Figure~\ref{PlotAEE}.
\begin{figure}[h!]
\centering
(a)\includegraphics[scale=0.35]{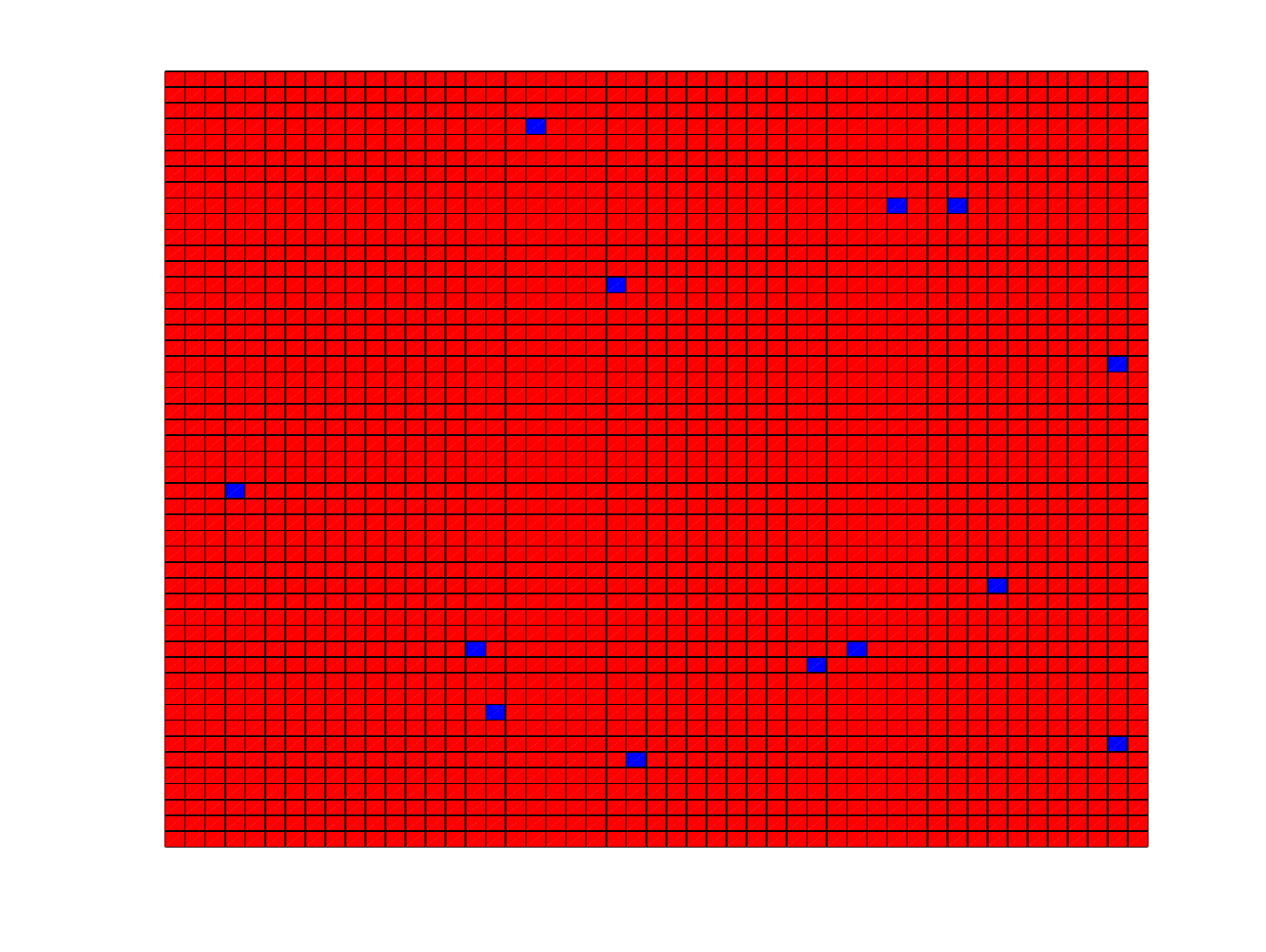}
\centering
(b)\includegraphics[scale=0.35]{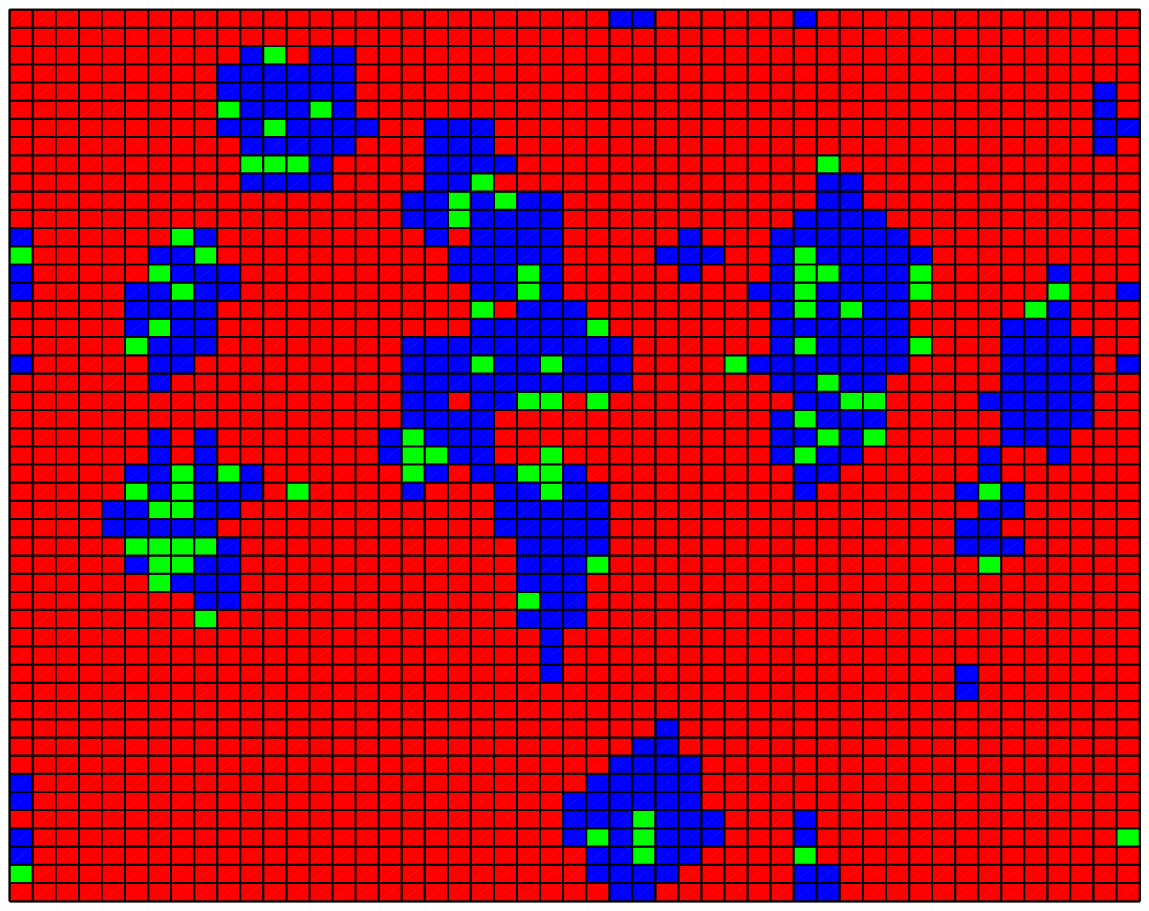}
\centering
(c)\includegraphics[scale=0.35]{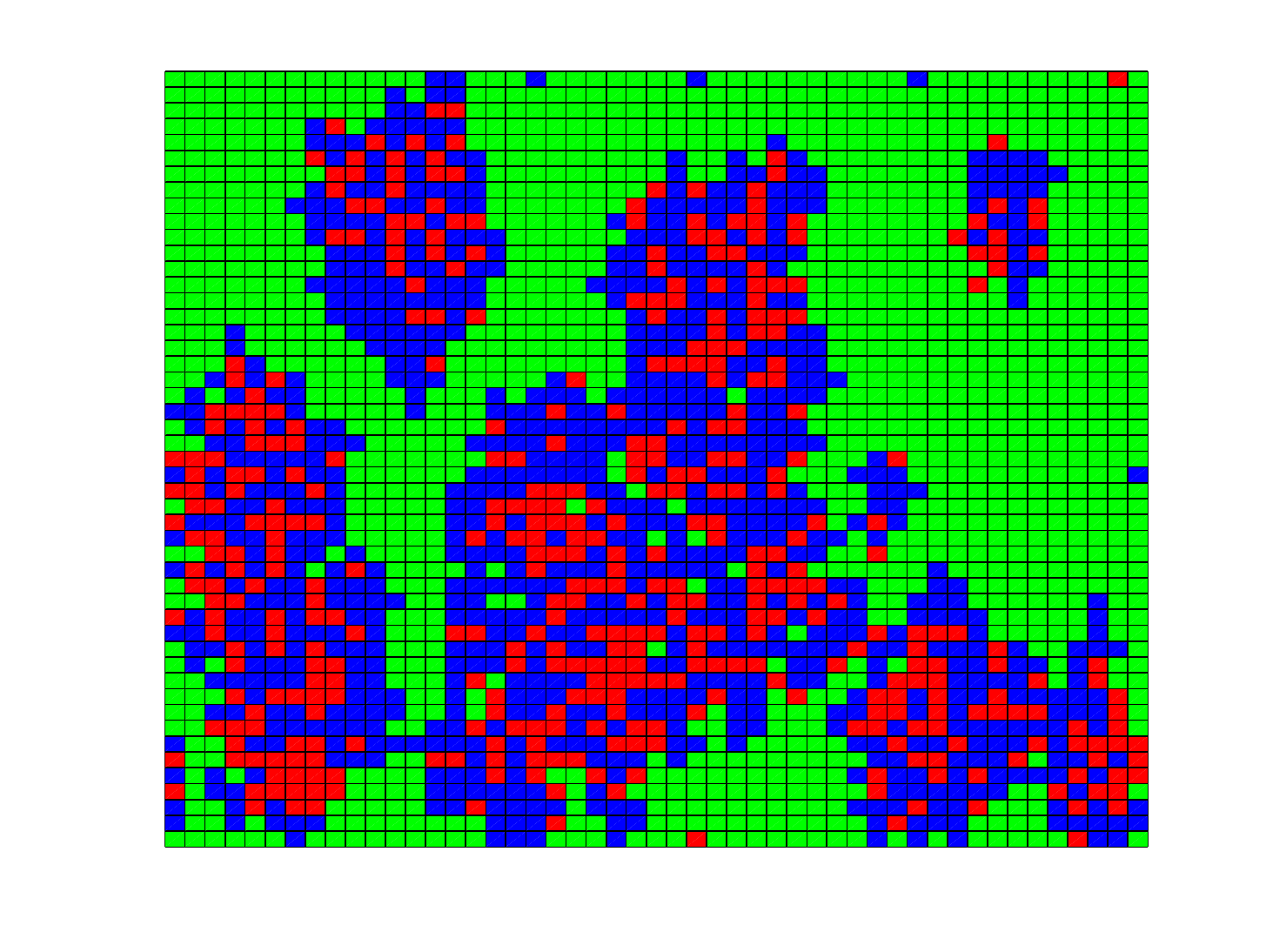}
\centering
(d)\includegraphics[scale=0.35]{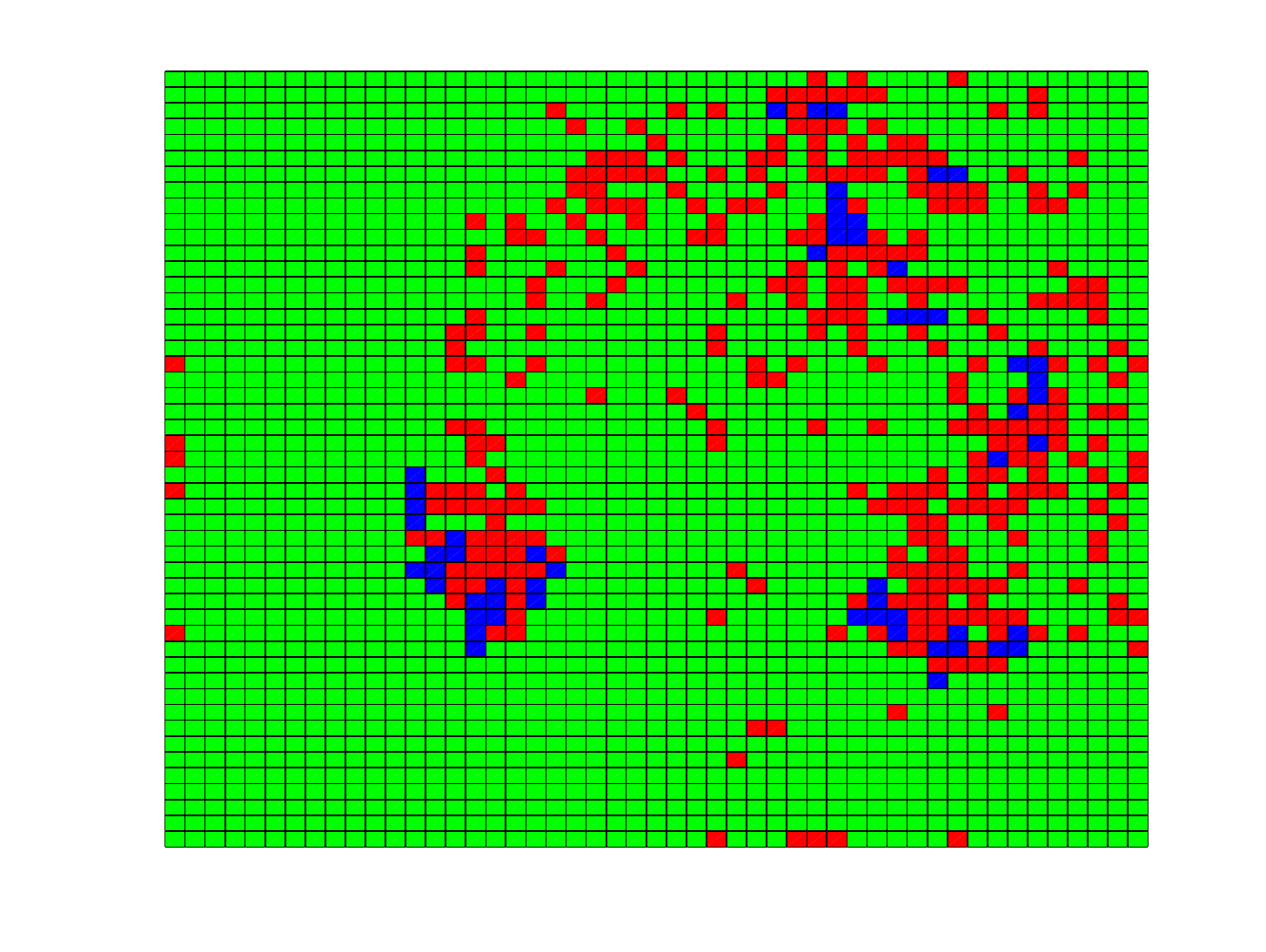}
\caption{\footnotesize CA Simulation/AEE Red - Free Land	Blue - Ant	Green - Pesticide	 (a) Initial Ant Invasion, t = 0 wks	(b) Intermediate ant invasion, t = 3.2 wks    (c) Peak ant invasion, t = 8.19 wks    (d) Final ant invasion, t = 88 wks}
\label{CASimu}
\end{figure}
For certain sets of parameters, our three models predict different long term behavior. The invasion threshold $\mathscr{I}_0$  from our MF model
predicts that the ant population will successfully colonize if $\phi > \mu$. However, the $\bar{\mathscr{I}}_0$ we received from our PA model predicts that when $\phi < \frac{4\mu(\mu+\epsilon)}{2\mu + 3\epsilon}$, our ants will not survive. There are some sets
of parameters where both of these conditions can be satisfied: For $ 2\mu >\phi > \frac{4\mu}{3}$, and 
$\epsilon < \frac{2\phi\mu - 4\mu^2}{4\mu - 3\epsilon}$, both of the above conditions are satisfied and the MF model disagrees
with the PA model. In Figure~\ref{PlotAEE}, we plotted a set of graphs from the three models with parameters that satisfy these conditions to determine 
the long term behavior of our ant colonies.

\begin{figure}[h!]
\centering
(a)\includegraphics[scale=0.35]{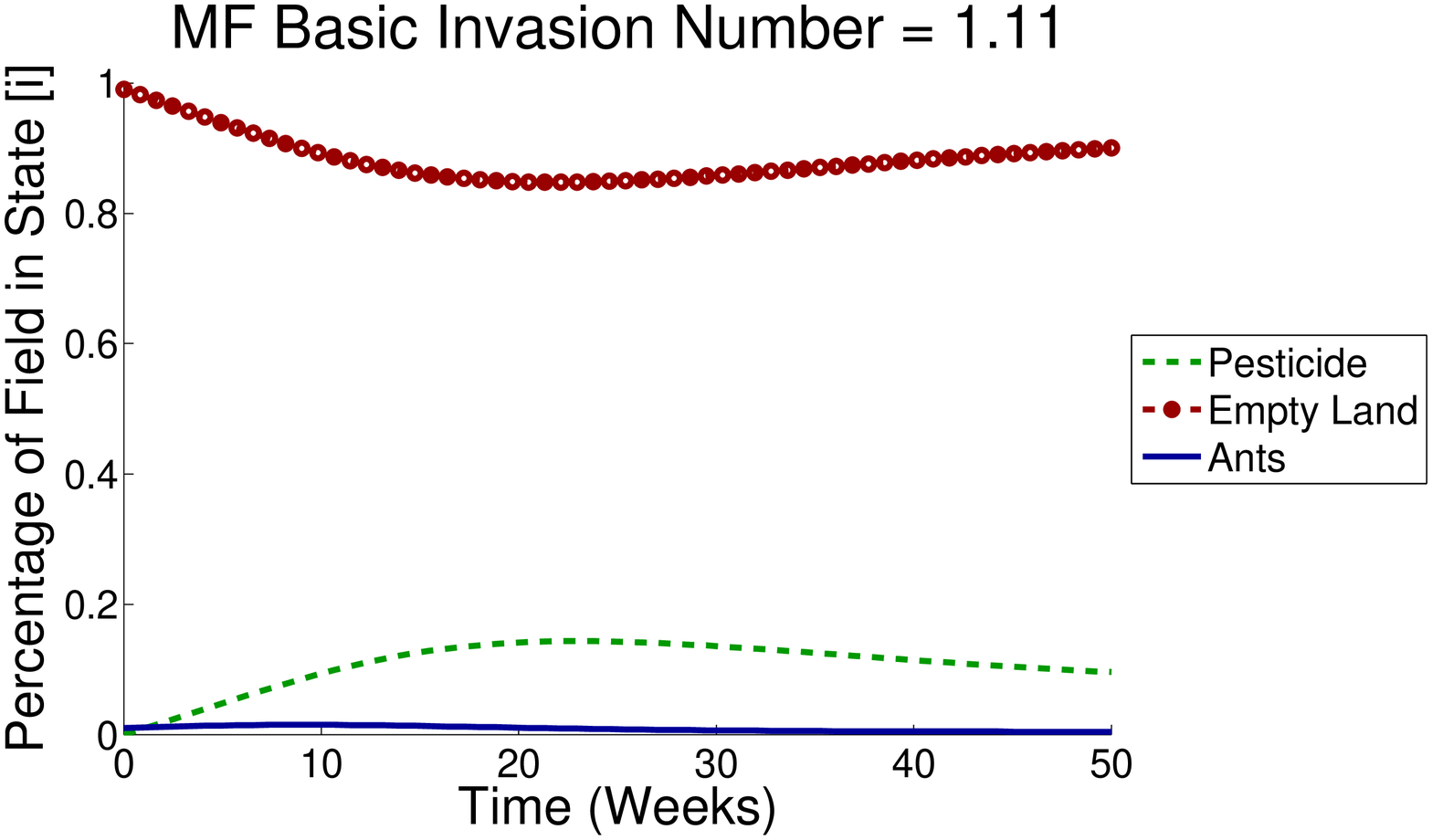}
\centering
(b)\includegraphics[scale=0.35]{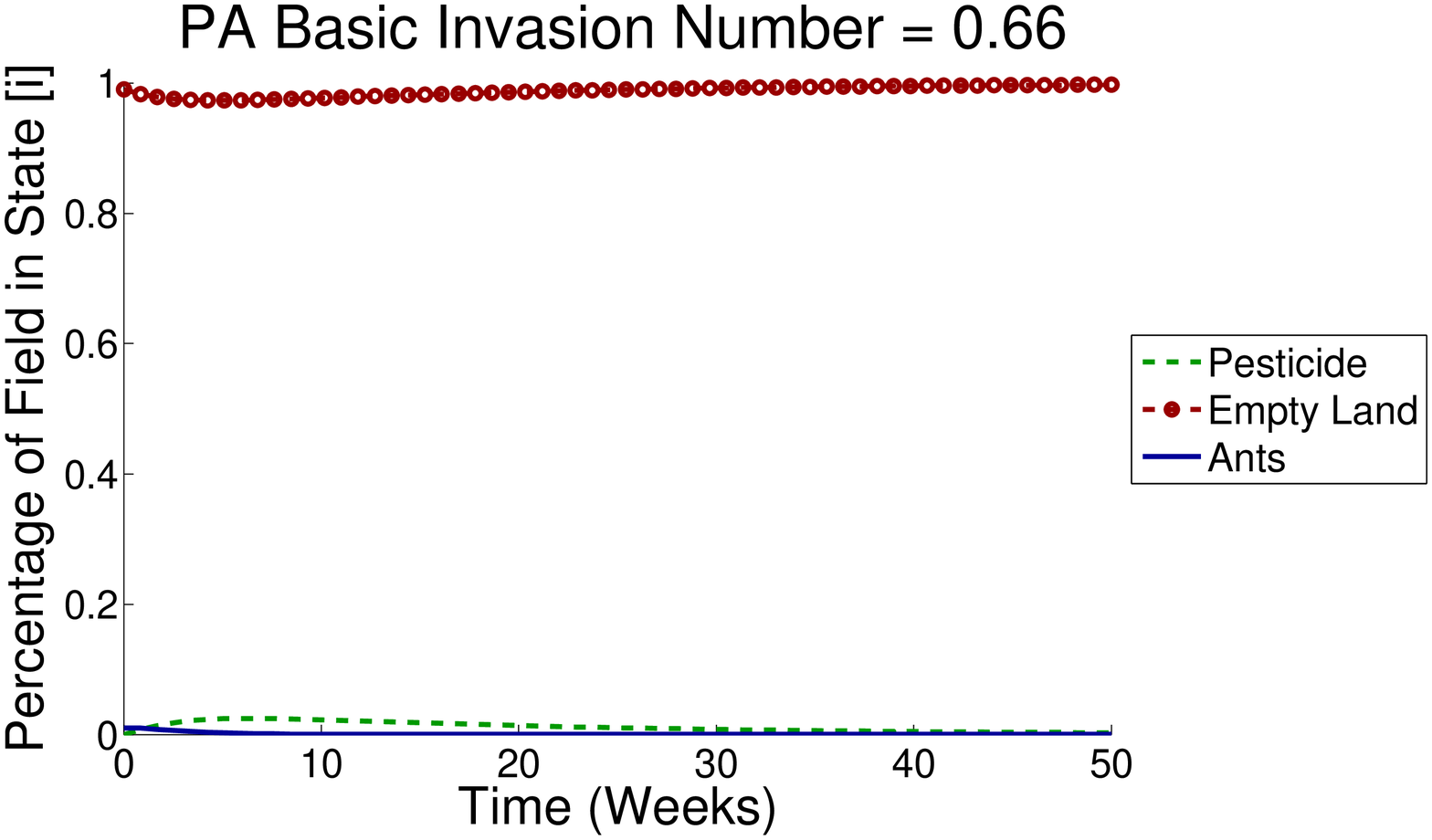}
\centering
(c)\includegraphics[scale=0.35]{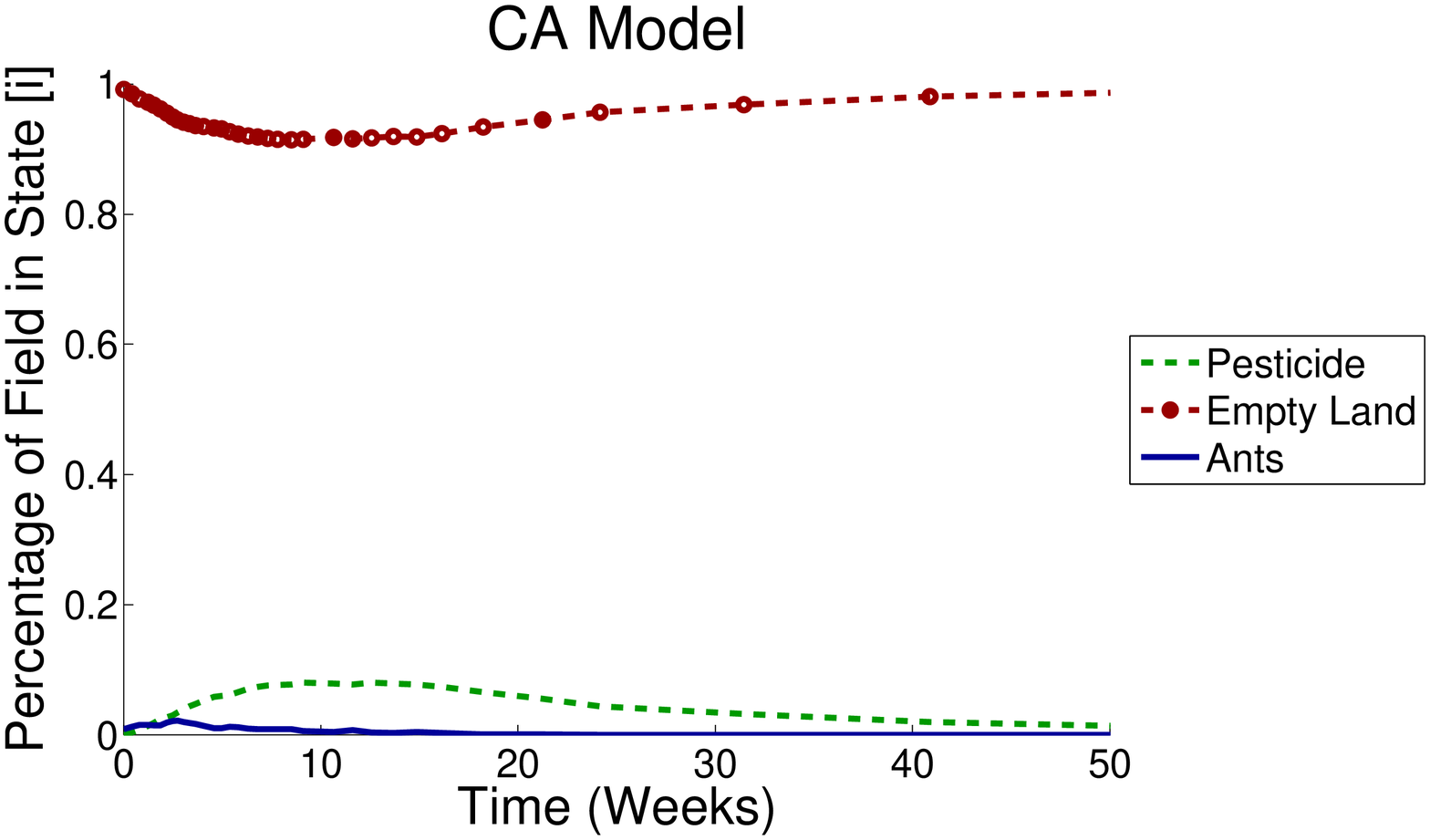}
\caption{\footnotesize (a) MF model. $\mathscr{I}_0 = 1.11$(b) PA model. $\bar{\mathscr{I}}_0 = .66$(c) CA model.}
\label{PhiSweeps}
\end{figure}

Although the MF model predicts the survival of the ants, the predicted ant colony density is so low that it effectively agrees with
the PA and CA models. However, because the ants do not perish in the MF model, the density of cells affected by pesticide do not
approach zero in the MF model. This difference between the MF model and the other two, suggests that for
certain sets of parameters the MF model does not accurately predict the density of pesticide application. If $\phi>\frac{4\mu}{3}$ then $\phi>\mu$ and the ants survive in the MF Model. If $\phi<\frac{4\mu(\mu+\epsilon)}{2\mu+3\epsilon}$ then $3\phi\frac{(\mu  +\epsilon)}{\mu (\phi + 4 \mu + 4\epsilon)}<1$ and the ants die out in the pair approximation model.
So if $\frac{4\mu(\mu +\epsilon)}{2\mu+3\epsilon}> \phi>\frac{4\mu}{3}$ the ants survive in the MF field model, die out in the PA model, and are more likely to die out in the CA. This is indicative of the large-scale, spatial independence of the MF model, in which ants may colonize indiscriminately on any empty lands. This model produces an ``overall picture" of how effectively the ants are spreading but may overstate their success in overcoming specific obstacles. The locality of impediments impediments to colony budding, which are the saturation of surrounding land with colonies or pesticide of a colony ready to bud, in the PA model makes it more difficult for the ants to overcome relatively high rates of pesticide application. The complete incorporation of all local interactions in the CA simulation means that spread of the ants is the least likely in situations where pesticide application is close to the budding rate.  This aggregate account of interactions in CA model seems to be more realistic for parameters as described above.
\subsection{Parameter Sweeps}
\begin{figure}[h!]
\centering
(a)\includegraphics[scale=0.35]{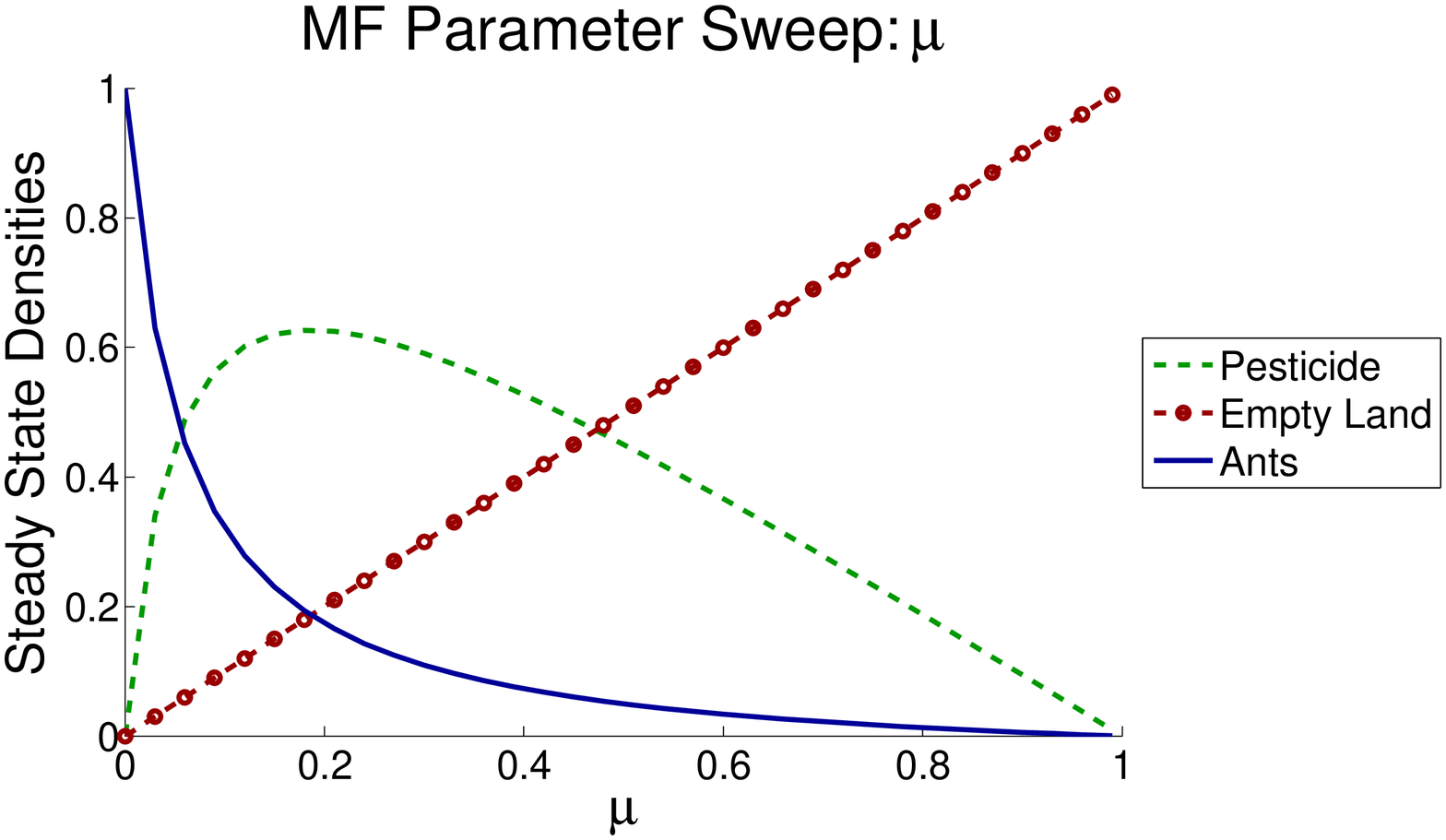}
\end{figure}
\begin{figure}[h!]
\centering
(b)\includegraphics[scale=0.35]{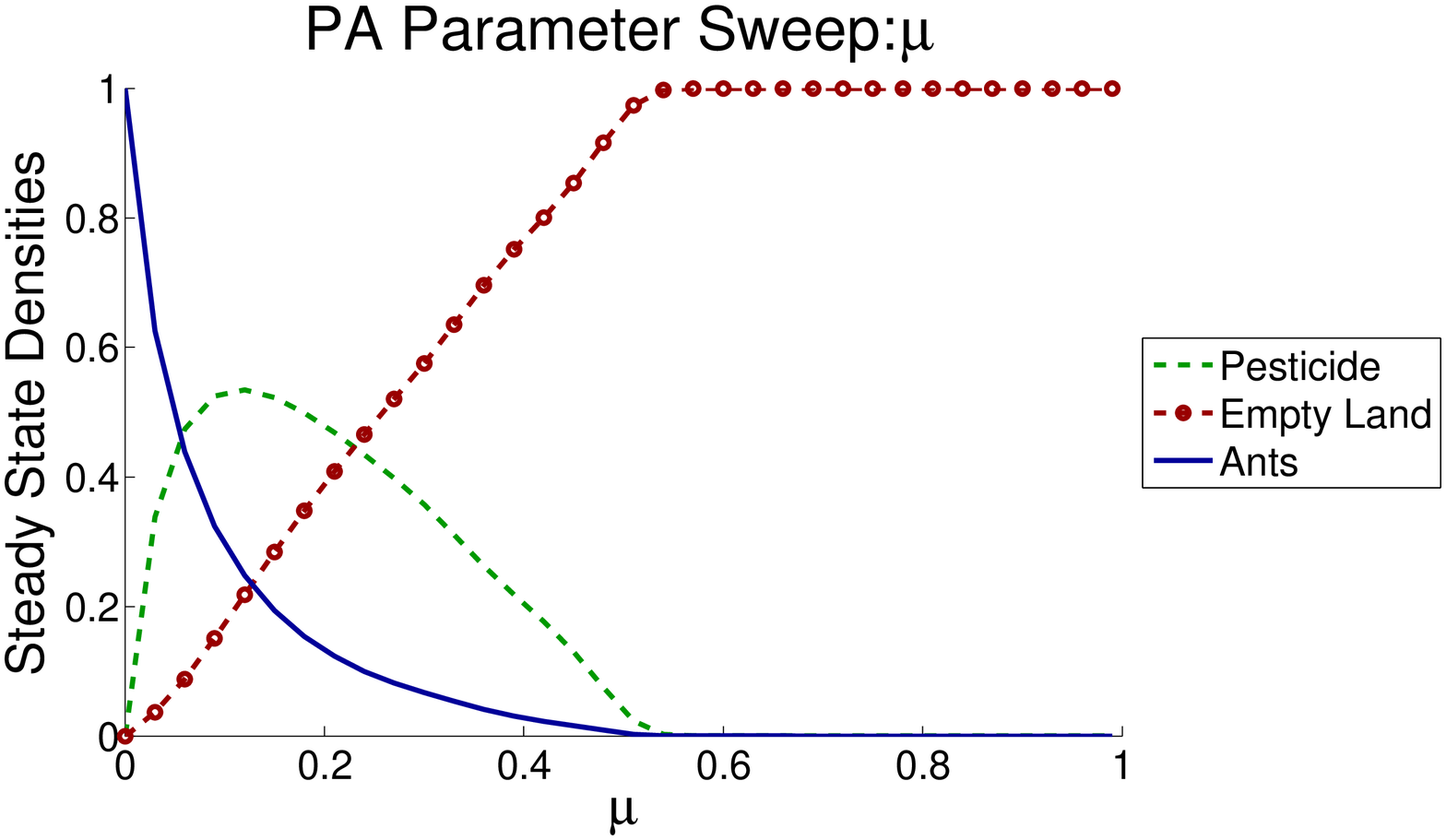}
\end{figure}
\begin{figure}[h!]
\centering
(c)\includegraphics[scale=0.35]{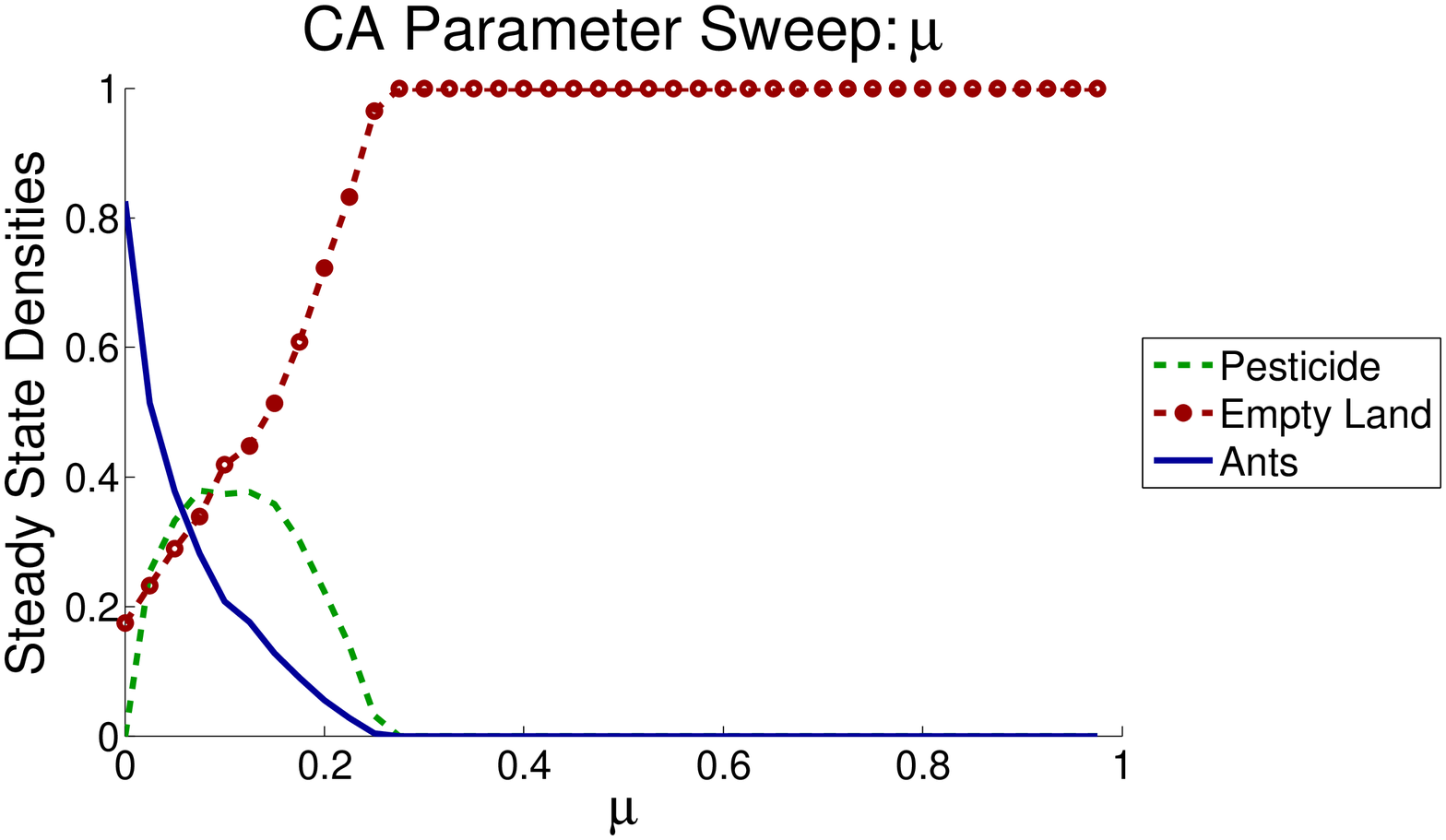}
\caption{\footnotesize (a) MF $\mu$ Parameter Sweep (b) PA  $\mu$ Parameter Sweep (c) CA  $\mu$ Parameter Sweep}
\label{MuParamSweep}
\end{figure}
We conducted a parameter sweep with respect to $\mu$, where $\phi$ and $\epsilon$ are constant. We chose $\epsilon = 1/18$ and
$\phi = 1$. The results are presented in Figure \ref{MuParamSweep}. As we increase $\mu$, the proportion of the ant colonies decreases. Biologically, as the pesticide application rate increases, ant colonies die faster. Once $\mu$ reaches the value where the maximum pesticide density is attained, the ants cannot colonize fast enough, and they will die out. Consequently, the proportion of free land should increase as $\mu$ increases.

The steady-state behavior of the proportion of pesticide treated land is not strictly decreasing or increasing with respect to $\mu$.
As $\mu$ increases, so does the steady-state proportion of pesticide treated land. However, there is an approximate maximum value of 0.6 for the steady-state proportion of pesticide treated land. As $\mu$ continues to increase,  the steady-state proportion of pesticide treated land decreases. At low $\mu$, pesticide distribution logically creates more land with pesticide. However, there is a critical value of $\mu$ where the pesticide application kills off ants so efficiently that less and less regions need to be sprayed with pesticide. When $\mu=0.3$, the ants completely die out, and no more pesticide needs to be applied. Biologically, this means that slow pesticide application curtails the population of ant colonies, but leaves them at a high proportion. Such an approach takes into account safety concerns but does not eliminate the biological invasion. Increasing pesticide application controls the ant colonies at lower proportions but requires higher levels of pesticide. At a certain point, aggressive application pays off  because the ant colonies are left at a very low steady state and pesticide levels can be lowered to safer levels. In summary, aggressive application quickly eliminates the ant infestation and leaves them at a low steady state, but at the cost of increased health and environmental concerns.

In the second parameter sweep, we varied the colony budding rate $\phi$ on the interval [0,2] for the MF, PA, and CA. We held $\mu=0.5$ and $\epsilon=\frac{1}{18}$. At low colony budding rates the ants are quickly exterminated, and empty and pesticide lands do not experience changes. In the MF model at $\phi=0.5$, there is a qualitative change in $P[2]$ which is to be expected since ${\mathscr{I}}_0 =1$, the basic invasion number, becomes greater than one. The ant colonies survive at low levels and the proportion of pesticide sites grows quickly for small increases in $\phi$, while the proportion of empty sites drops sharply. Such steady state transitions are expected since the AEE begins to grow while $\phi >\mu$. The PA model displays similar qualitative behavior, although the ant colonies become endemic at a higher value of $ \phi \approx .9 $ since incorporation of spatial distribution prevents high contact rates between empty land sites and ant colony sites. For both the MF and PA models the increase in $\phi$ produces a greater steady-state endemic point for the ants, but there is an asymptotic limit as $\phi \gg 1$. This limit for the endemic equilibrium seems to be close to  0.1. Indeed, if we refer back to the AEE$=\left( \frac{\mu}{\phi},\frac{\epsilon}{\mu+\epsilon}(1-\frac{\mu}{\phi}) \right)$,
\begin{equation}
 \lim_{\phi \rightarrow \infty}
 \mbox{AEE} = \left( 0,\frac{\epsilon}{\mu+\epsilon} \right).
\end{equation}
\indent It can be seen from both the MF and PA models that $\frac{\epsilon}{\mu+\epsilon}$ qualitatively seems to be an upper bound for the proportion of ant colony sites $P[2]$. However, empty sites $P[1]$, experience sharp decreases until there is a saturation of ant colonies and pesticide sites. We can also see that for high rates of pesticide degradation, the ant endemic upper bound can be increased. Biologically, this means that if an application of pesticide is short-lived, ant colonies will quickly reoccupy the land after the initial colony is eliminated.  Such a pesticide would be highly ineffective and provide less benefits in relation to its possible harmful effects. On the contrary, low rates of $\epsilon$ prevent quick recolonization and are in fact a sort of limiting factor to ant colony budding. If we rewrite the results from {25} as \[\frac{1}{\frac{\mu}{\epsilon}+1}\] we can see the relationship between $\epsilon$ and $\mu$. In fact, even at very high colony budding rates, the ant invasion can be terminated if the pesticide remains effective for a long time. The CA simulation shows similar behavior to the PA model, with  the proportion of ant colony sites $P[2]$ remaining at low levels despite increasing $\phi$. These numerical and analytic results suggest that using pesticide that decomposes slowly may be more effective than aggressive spraying in controlling initial ant invasions.

\begin{figure}[h!]
\centering
(a)\includegraphics[scale=0.35]{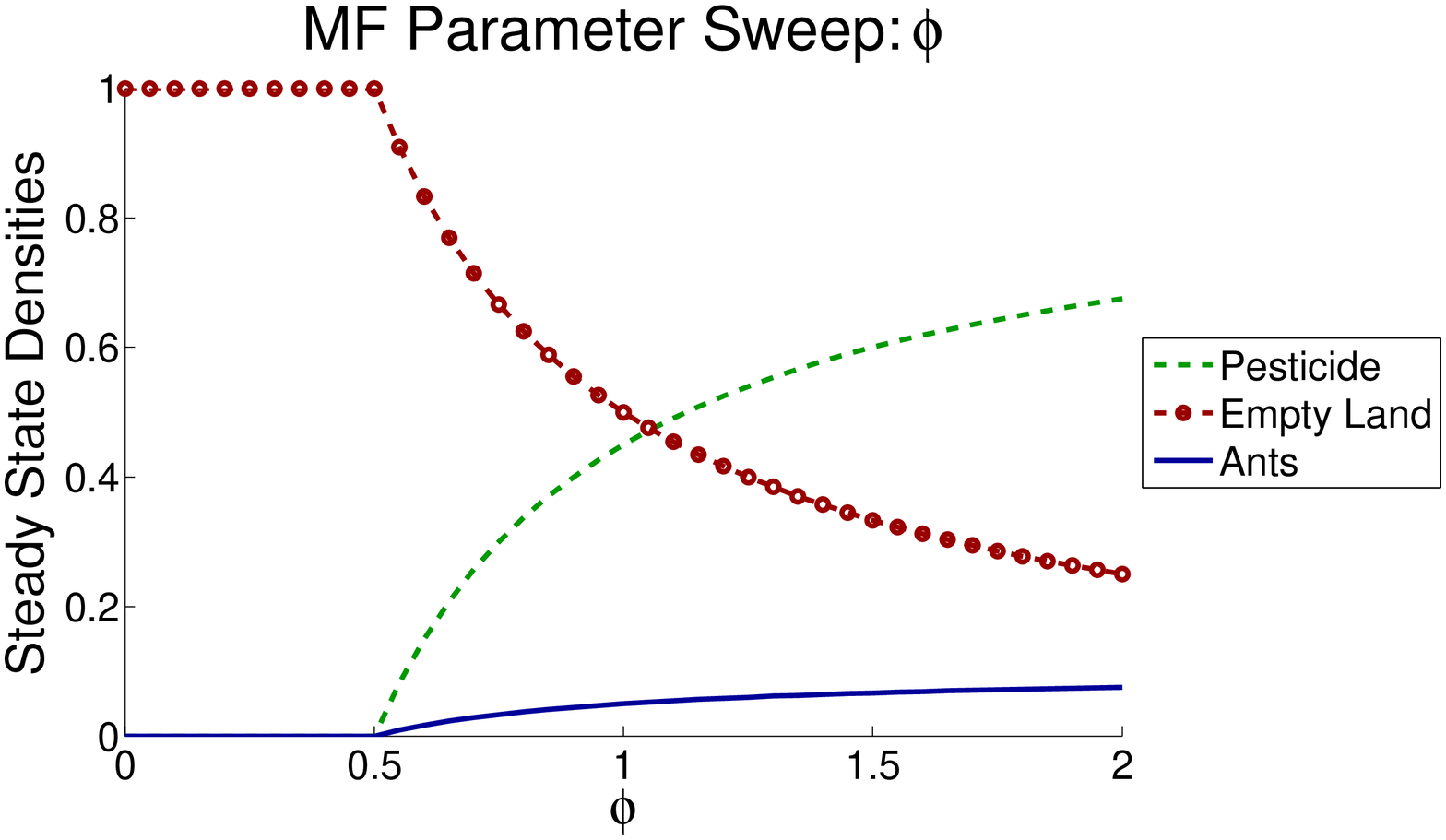}
\centering
(b)\includegraphics[scale=0.35]{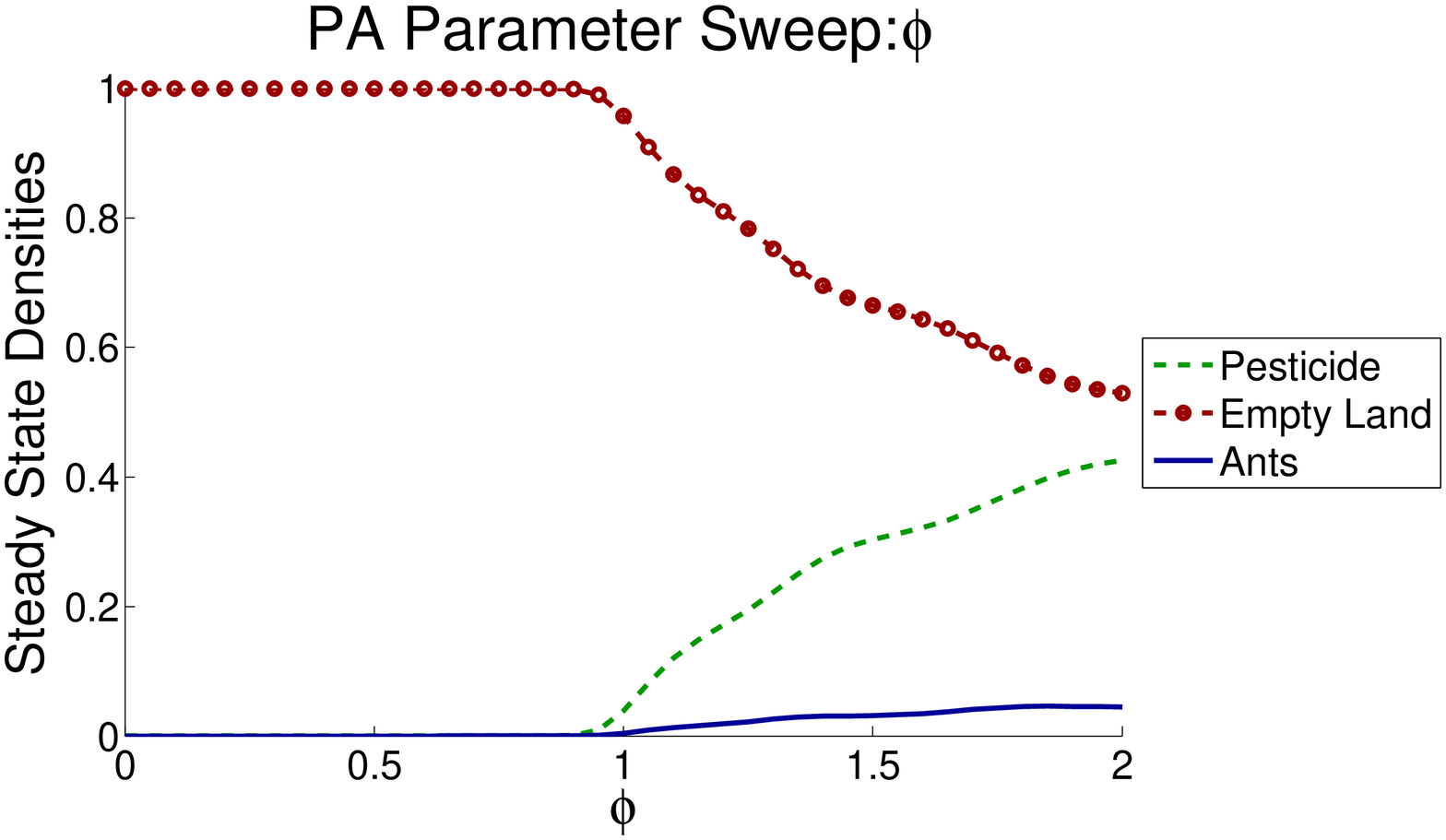}
\centering
(c)\includegraphics[scale=0.35]{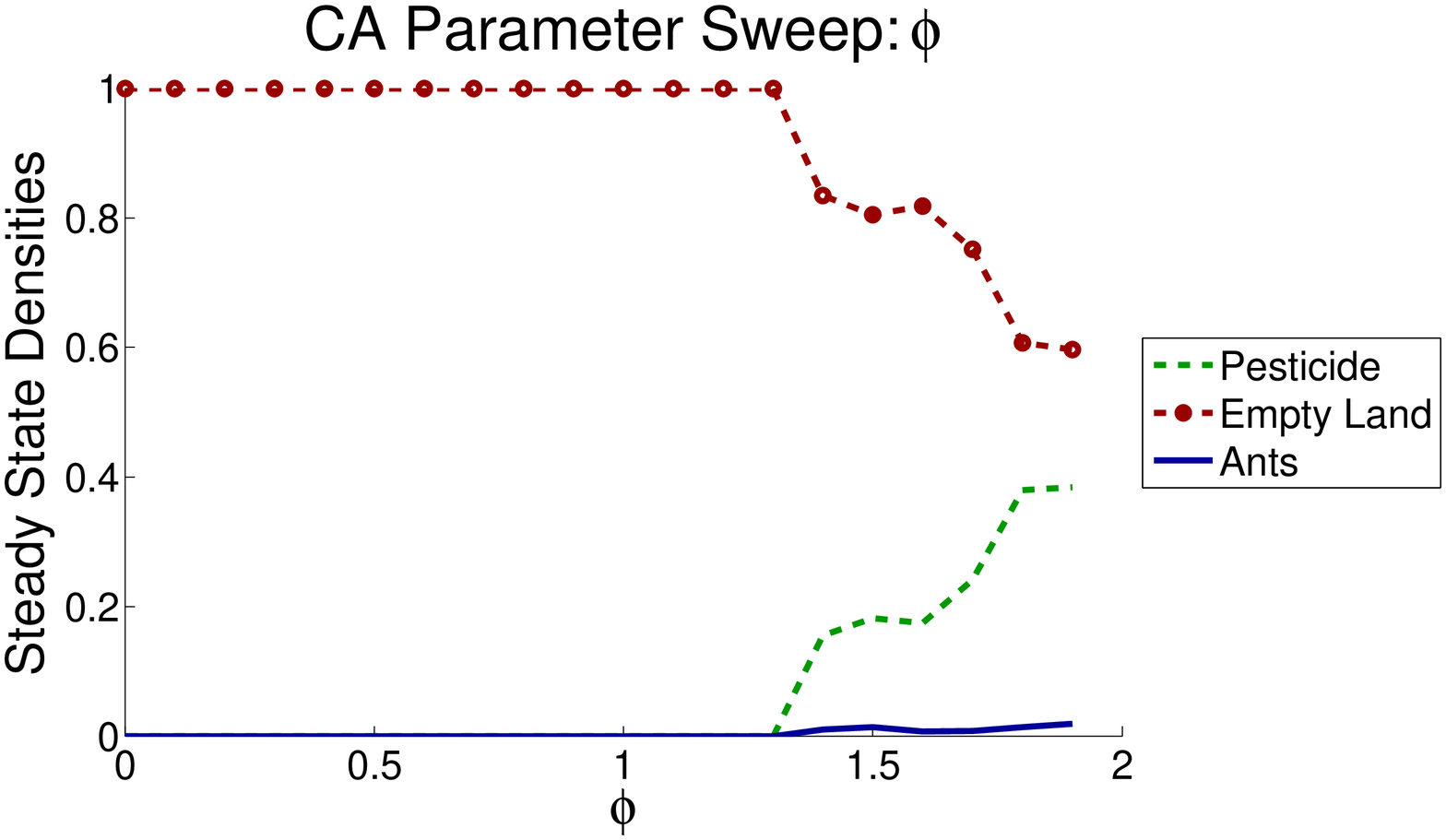}
\caption{\footnotesize (a) MF $\phi$ Parameter Sweep (b) PA $\phi$  Parameter Sweep (c) CA $\phi$ Parameter Sweep}
\end{figure}

\section{Discussion}
Our first research goal was to develop an understanding of the current Rasberry ant biological invasion in the U.S. Gulf states.
Our  second  goal was to construct a probabilistic spatiotemporal simulation that models the spread and control of an invasive ant species, considering both  pesticide application and degradation. We utilized a CA framework to develop a 2-D lattice with three states and three events defined according to the biological assumptions of the ant colony budding process and pesticide application and degradation. The third goal was to build two deterministic models that approximate our spatially explicit CA simulation with the same biological assumptions.

The MF omits spatial relationships and assumes homogeneous mixing of states. Ant colony budding success is solely dependent on the presence of empty land in the  lattice, by passing the requirement that such a process can only occur if there is adjacent empty land. According to our analytic results of this model, the invasion success of the ant colonies only depends on the ratio $\frac{\phi}{\mu}$. While this reflects the essential relationship between the rate at which ant colonies grow and   the rate at which they are suppressed with pesticide, it overstates the possibility of local empty land being readily available. For example, an ant colony which is surrounded by pesticide sites is free to colonize in any empty sites regardless of its location. The result is that this model overstates the likelihood that the ants will survive pesticide application and remain endemic. Relative to the other models, the MF predicts a higher and earlier ant invasion peak, which emphasizes the need for pesticide application very early on.

The PA model includes the implicit spatiotemporal local interactions between states and more realistically describes the spread of ants in a 2D lattice. The different combinations of adjacent state pairs in the PA ODEs effectively take into account the probability of ant colonies having local availability of empty land. The numerical results show that the ants reach extinction faster than in the the MF model, and survive at lower proportions for parameter conditions that produce endemic equilibria. This result was predicted from the derivation of $\mathscr{I}_0$ and $\bar{\mathscr{I}}_0$:
\[\bar{\mathscr{I}}_0 = \mathscr{I}_0 \frac{3(\mu+\epsilon)}{\phi + 4\mu + 4\epsilon} < \mathscr{I}_0,\] where the MF invasion threshold number $\mathscr{I}_0$ is independent of the rate of degradation and $\bar{\mathscr{I}}_0$ depends on $\epsilon$. In the PA model, pesticide application not only eliminates an ant colony, but a cell sprayed with pesticide cannot be colonized until the pesticide degrades. This suggests there may be more time to respond to the ant invasion than the MF predicts, and apply pesticide at lower rates. In terms of public health and environmental concerns, the PA model shows that there are less detrimental strategies to control the ant invasion. 

	The CA simulation used a 2D lattice and Poisson event process to model the probabilistic spread of ants.  The bookkeeping of state changes throughout space and time in our CA simulation can be interpreted as generating data about the spread of Rasberry ants {\it  in silico}. The simulations showed that for colony budding rates less than the  pesticide application rate, the invasion died out fairly quickly. This suggests that if the ants do not have good environmental conditions, their spread may be stopped with minimal pesticide application.  For colony budding rates less than twice the pesticide application rate, the simulations showed that the invasions did not always become extinct and the pesticide degradation rate played a bigger role in determining the steady state. Consequently, even at low colony budding rates and high pesticide application rates, the ants may linger and the success of intervention depends on the pesticide remaining effective for longer periods of time.  This presents a conflict of interest between eradicating the ant invasion and prolonged pesticide contamination in the environment. For budding rates twice the pesticide application rate the simulations showed that the ant invasion became endemic. Rapid ant colony growth may require increased pesticide dispersal at the cost of health and environmental concerns.  A shortcoming of our CA simulation was the inaccessibility to faster processors to average more trials and lessen the stochastic anomalies. In the future, we would like to improve our code in order to strengthen the conclusions of the CA simulations.

\section{Future Work}
The models and simulation presented in this paper are limited. The colonization process is not colony size dependent and it is assumed that a small colony colonizes as often as a larger colony.  Another assumption made in the presented models was that the ants could only colonize in the cardinal directions and all land has equal amount of resources. The pesticide is also assumed to be 100\% effective though most pesticides are not 100\% effective and experimenting with different kill rates could prove fruitful. Further work would include a colony dependent birth rate. This would increase the validity of our models because in nature, the budding process is dependent on the colony size and resources. Also we would assume the ant colonies can move within a Moore neighborhood ( four cardinal directions including Northeast, Northwest, Southeast, and Southwest). This will increase the dynamics in potential colonization. If we assumed amount of resources is cells, we could get insight on real life probabilistic colonization rates and implement strategies to drive invasive ant species to extinction.

\section*{Acknowledgments} 
We would like to thank Dr.~Carlos Castillo-Chavez, Executive Director of the Mathematical and Theoretical Biology Institute (MTBI), for giving us the opportunity to participate in/for specialized lectures.  We would also like to thank Co-Executive Summer Directors Dr.~Erika T.~Camacho and Dr.~Stephen Wirkus for their efforts in planning and executing the day to day activities of MTBI. We would also like to extend gratitude to Kamal Barley for his graphic enhancements and to Ben Morin for the personalized lectures. We would also like to extend gratitude to Marta Sarzynska for revisions and illustrations,  to Dustin Padilla for his mathematical revisions, and to Dr Baojun Song for his pearls of wisdom. This research was conducted in MTBI at the Mathematical, Computational and Modeling Sciences Center (MCMSC) at Arizona State University (ASU). This project has been partially supported by grants from the National Science Foundation (NSF - Grant DMPS-1263374), the National Security Agency (NSA - Grant H98230-13-1-0261), the Office of the President of ASU, and the Office of the Provost of ASU.

          \bibliographystyle{plain}
          \bibliography{newantbib}

\newpage
\section{Appendix}
\subsection{Next Generation Operator}
X is the vector function which contains all pair sites with ant colonies, and we will call them infested sites.  $\dot{X}=$ is the rate of change of infested sites, which can be rewritten as\[ \dot{X}=\mathscr{F}-\mathscr{V}\]
The vector $\mathscr{F}$ is the  birth of new colony sites coming from other colony sites, and $\mathscr{V}$ is changes in ant colony sites from the progression of the invasion(i.e. pesticide application).
\begin{eqnarray*}
\dot{X}& = & \begin{bmatrix}
P[02]\\
\\
P[12]\\
\\
P[22]
\end{bmatrix} \\ \\ \\
\mathscr{F} &=&\begin{bmatrix}
& \frac{3}{4}P[01] \phi Q_{2|1} & \\
\\
& P[11] \left(\frac{3}{4}\phi Q_{2|1} \right) & \\
\\
& 0 &
\end{bmatrix}, \; \;
\mathscr{V} = \begin{bmatrix}
-\mu P[22] + P[02] \left(\mu + \epsilon \right)\\
\\
-\epsilon P[02] + P[12] \left(\frac{3}{4} \phi Q_{2|1}+\frac{\phi}{4} + \mu \right) \\
\\
-2P[12] \left(\frac{3}{4} \phi Q_{2|1} + \frac{\phi}{4} \right) + 2P[22] \mu
\end{bmatrix}
\end{eqnarray*}
We take the Jacobian of $\mathscr{F}$ and $\mathscr{V}$. Let $F=Jacobian(\mathscr{F} )$ and $V=Jacobian(\mathscr{V})$
\begin{equation*}
F=\begin{bmatrix}
0 & \frac{3}{4} \phi P[13] \frac{ \left(P[01] + P[11] \right)}{ \left( P[01] + P[11] + P[12] \right)^2} & 0 \\
\\
0 & \frac{3}{4} \phi P[11] \frac{ \left(P[01] + P[11] \right)}{ \left( P[01] + P[11] + P[12] \right)^2} & 0 \\
\\
0 &  0 & 0
\end{bmatrix}
\end{equation*}
\begin{equation*}
V=\begin{bmatrix}
\left( \mu + \epsilon \right) & 0 & -\mu \\
\\
-\epsilon & \frac{3}{4} \phi \left( \frac{P[12]}{P[01]+P[11]+P[12]} + \frac{P[12] \left(P[01] +P[11] \right)}{ \left( P[01] + P[11] + P[12] \right)^2} \right) + \frac{\phi}{4} +\mu & 0 \\
\\
0 &  -2 \frac{3}{4} \phi \left( \frac{P[12]}{P[01]+P[11]+P[12]} + \frac{P[12] \left(P[01] +P[11] \right)}{ \left( P[01] + P[11] + P[12] \right)^2} \right) - 2\frac{\phi}{4}  & 2 \mu
\end{bmatrix}
\end{equation*}
\subsection{Pair Approximation AFE}
\begin{figure}[h!]
\begin{center}
\caption{\footnotesize Symmetric Proof}
\includegraphics[scale=.75]{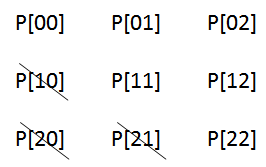}
\label{Symmetric Proof}
\end{center}
\end{figure}

\begin{figure}[h!]
\begin{center}
\caption{\footnotesize dP[01] Description}
\includegraphics[scale=.75]{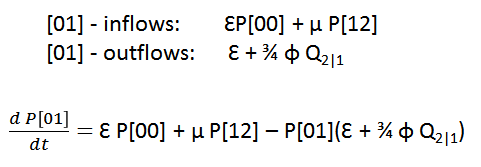}
\label{dP}
\end{center}
\end{figure}

\[(P[00],P[01],P[02],P[11],P[12],P[22])\]

\[AFE = (0,0,0,1,0,0)\]

\begin{equation*}
F(AFE)=\begin{bmatrix}
0 & 0 & 0
\\
0 & 3 \frac{\phi}{4} & 0
\\
0 & 0 & 0
\end{bmatrix}
\end{equation*}

\begin{equation*}
V(AFE)=\begin{bmatrix}
\mu + \epsilon & 0 & -\mu
\\
-\epsilon  &  \frac{\phi}{4} + \mu & 0\\
\\
0 & -\frac{\phi}{2} &  2\mu
\end{bmatrix}
\end{equation*}

\begin{equation*}
FV^{-1}=\begin{bmatrix}
 0 & 0  & 0 \\
\\
 \frac{3 \phi \epsilon}{\mu (\phi + 4 \mu +4\epsilon)} & 3\phi\frac{(\mu  +\epsilon)}{\mu (\phi + 4 \mu + 4\epsilon)} &\frac{3}{2} \frac{\phi \epsilon}{\mu (\phi + 4 \mu+ 4\epsilon)} \\
\\
 0 & 0 & 0
\end{bmatrix}
\end{equation*}
The Eigenvalues of $FV^{-1}$
\begin{equation*}
\begin{bmatrix}
 & 0 & \\
\\
& 0 & \\
\\
& 3\phi\frac{(\mu  +\epsilon)}{\mu (\phi + 4 \mu + 4\epsilon)} & 
\end{bmatrix}
\end{equation*}
Basic Invasion Number $\mathscr{I}_0=3\phi\frac{(\mu  +\epsilon)}{\mu (\phi + 4 \mu + 4\epsilon)}$
\\
\subsection{Original Flow Chart}
The following chart describes the changes in states in order according to the events possible at each site.

\begin{figure}[h!]
\begin{center}
\caption{\footnotesize Flow Chart}
\includegraphics[scale=.75]{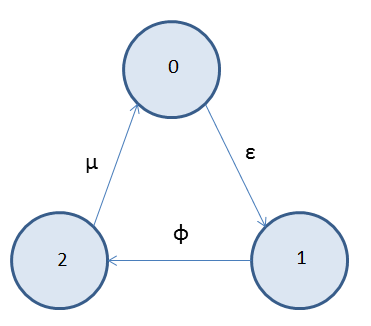}
\label{Flow Chart}
\end{center}
\end{figure}

\subsection{Stochastic Simulation}
Our Cellular Automaton model is a stochastic process which depends on a Poisson process for the occurrence of events. The realization of the events is  dependent on the presence of sites needed for a particular event to occur. However, the ant colony birth event is also dependent on the density and spatial location of ant colony sites $[1]$ and empty lands $[2]$, since they must occur adjacently. An ant colony must be next to an empty site to be a potential birth location. This means that a birth event can be wasted if a cell in state $[2]$ does not have any neighbors in state $[1]$. In figure 12 there is an example of a birth event that has been selected and a random cell in state $[2]$ has been chosen. It has no neighbors in state $[1]$ , which means the event will be a wasted event.
\begin{figure}[h!]
\begin{center}
\caption{\footnotesize Wasted Birth Diagram}
\includegraphics[scale=.70]{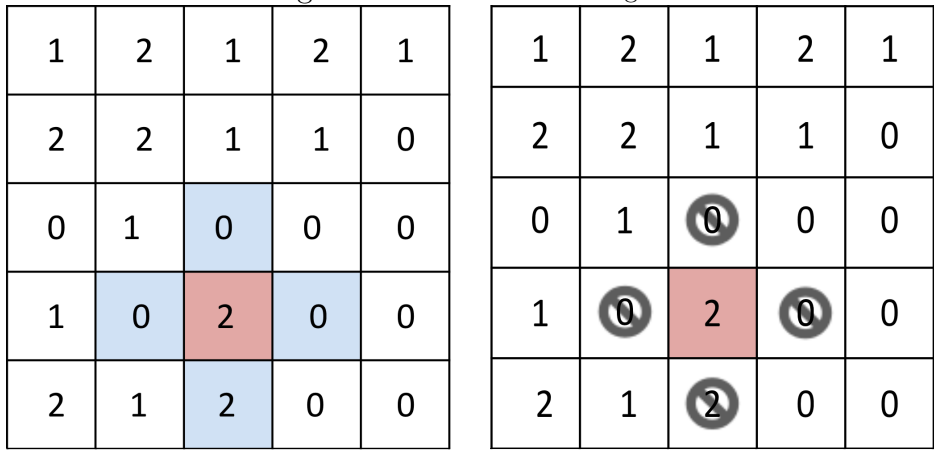}
\label{Wasted Birth}

\end{center}
\end{figure}

\subsection{Stochastic Simulation Code}
Below is the code we used to create our stochastic simulation.
\begin{verbatim}
#############################################################################
#############################################################################
#############################################################################
## Authors: Valerie Cheathon, Agustin Flores, Victor Suriel, Octavius Talbot,
##          Adrian Smith
##
## Stochastic Simulation of Ant Colony Spread with Pesticide application
##
########################################################################
##
##This code will set up a cell lattice which represents the spatial location
##of ants, habitable spots without ants, and pesticide application.
##
##In the matrix, a 0 entry corresponds to pesticide, 1 corresponds to land without
##ants, and 2 to land with ant colony present
##
########################################################################

#---------------------------------------------------------------------------------
#---------------------------------------------------------------------------------
#---------------------------------------------------------------------------------
# Parameters and Initial Condition
#----------------------------------------------------------------------------------
#----------------------------------------------------------------------------------
#----------------------------------------------------------------------------------
set.seed(100)
n<-50                    # The size of the n x n matrix to represent the cells.
A<-matrix(ncol=n,nrow=n) # Generating the null matrix
S0_0<-0                 # initial percentage of land with pesticide
S1_0<-0.99                # initial percentage of habitable land without ant colony
S2_0<-.01                 # initial percentage of land with ant colony
S21<-0                   # initial number of ant/empty cell pairs

S0<-NULL         # land sprayed with pesticide at ith time
S1<-NULL         # habitable land without ants at ith time
s2<-NULL         # land with ant colony at ith time

S0t<-0        # land sprayed with pesticide throughout time
S1t<-2479       # empty land throughout time
S2t<-21
# land with ant colony throughout time

#-------------------------------------------------------
# event rates: 3 possible events
#-------------------------------------------------------
mu<-1/8  # pesticide spray rate
epsilon<-1/18   # pesticide decomposition rate
phi<-1     # colony budding rate

P0<-NULL
P1<-NULL
P2<-NULL

#---------------------------------------
# initialize cumlative probabilities
#---------------------------------------
cump0<-NULL
cump1<-NULL
cump2<-NULL

tr<-NULL     # initialize total rate of all events
v<-NULL
time<-0      # initial time step

#------------------------------------------------------
#------------------------------------------------------
# This generates the n X n matrix with entries 0,1,2
#------------------------------------------------------
#------------------------------------------------------
for (i in 1:n){
  for (j in 1:n){
    x<-runif(1,0,1)
    if (x < S0_0){
      A[i,j]<-0
    } else if (x >= (S0_0) && x<=(S0_0+S1_0)){
      A[i,j]<-1
    } else if ( x >= (S0_0+S1_0)){
      A[i,j]<-2
    }

  }  # end inside forloop

} # end outside forloop

#-------------------------------------------------------------------------------------
#-------------------------------------------------------------------------------------
#-------------------------------------------------------------------------------------
# this function counts the P[21] pairs
#
# It will loop through each matrix element and count its neighboring elements that
# are 1
#--------------------------------------------------------------------------------------
#--------------------------------------------------------------------------------------
A1<-A
pair_function<-function(){
  for ( i in 1:n){
    for (j in 1:n){
      if (A[i,j]==2){  
        if (i==n-1){
          if(A[n,j]==1){
            S21<-S21+1
          }
        } else if (A[(i+1)%%n,j]==1){
          S21<-S21+1
        }
        
        if ((i-1)==0){
          if (A[n,j]==1){
            S21<-S21+1
          }
        } else if (A[i-1,j]==1){
          S21<-S21+1
        }
        if (j==n-1){
          if(A[i,n]==1){
            S21<-S21+1
          }
        } else if (A[i,(j+1)%%n]==1){
          S21<-S21+1
        }
        
        if ((j-1)==0){
          if (A[i,n]==1){
            S21<-S21+1
          }
        } else if (A[i,j-1]==1){
          S21<-S21+1
        }
        
      }
    }
  }

  return(S21)

}  # end outside forloop

#--------------------------------------------------------------
#--------------------------------------------------------------
#--------------------------------------------------------------
# this is the main loop which runs the simulation
#--------------------------------------------------------------
#--------------------------------------------------------------
#--------------------------------------------------------------
timeseries = 0
for (k in 1:300){
 
  S21<-0               # this is the number of pairs of states of 2 and 1
  S21<-pair_function() # this is the function which counts pairs of 2 and 1
 
  S0<-sum(A[,]==0)     #sums the number of cells of state 0 at ith step
  S1<-sum(A[,]==1)     #sums the number of cells of state 1 "        "
  S2<-sum(A[,]==2)     #sums the number of cells of state 2 "        "
 
    
  tr<-(epsilon*S0)+(3/4)*(S21*phi)+(S2*mu)   #total rate
  timeseries = append(timeseries, timeseries[k] + rexp(1,tr))

  P0<-(epsilon*S0)/(tr)      # probability 1
  P1<-((3/4)*S21*phi)/(tr)   # probability 2
  P2<-(S2*mu)/(tr)           # probability 3
 
  #----------------------------------------------------------------------------------
  # the following cumulative probabilities generate the measure space for the events
  #-----------------------------------------------------------------------------------
  cump1<-P0      # the cumulative probability 1
  cump2<-P0+P1   # the cumulative probability 2
 

  #------------------------------------------------------------------------------------
  ## the following lines carry out the events -> pesticide decomposition, birth,
  ## and the application of pesticide/colony death
  ## IMPORTANT COMMANDS
  ##
  ## 1.) The "v" vector records the indices of the matrix elements(cells) at 
  ##     the state needed for the randomly selected event to occur
  ## 
  ##
  ## 2.) The "pointer" randomly samples from the indices of "v" to select only one 
  ##     element
  ##
  ##------------------------------------------------------------------------------------

  coin<-runif(1,0,1)    # sampling a number between 0 and 1 for the measure space

  if (coin < cump1){    # This is pesticide decomposition
  
    for (i in 1:n){
      for (j in 1:n){
        if (A[i,j]==0){
          v<-rbind(v,c(i,j))
        
        }
      }
    
    }  # end pesticide decomp event

    pointer<-sample(seq(1,length(v[,1]),1),1)
    A[v[pointer,1],v[pointer,2]]<-1
    v<-NULL

  } else if (coin >=cump1 && coin<=cump2){  ## This will give colony birth event
      for (i in 1:n){
        for (j in 1:n){
          if (A[i,j]==2){
            v<-rbind(v,c(i,j))
          }
        }
  
      } # end outside forloop statement 

      pointer<-sample(seq(1,length(v[,1]),1),1)
  
      if ((v[pointer,1])==n-1){
        if (A[n,v[pointer,2]]==1){
          A[n,v[pointer,2]]<-2
          v<-NULL
        } 
      } else if ((v[pointer,1]-1)==0){
        if (A[n,v[pointer,2]]==1){
          A[n,v[pointer,2]]<-2
          v<-NULL
        }
      } else if (v[pointer,2]==n-1){
        if (A[v[pointer,1],n]==1){
          A[v[pointer,1],n]<-2
          v<-NULL
        }
      } else if (v[pointer,2]==1){
        if (A[v[pointer,1],n]==1){
          A[v[pointer,1],n]<-2
          v<-NULL
        }
      } else if (A[(v[pointer,1]+1)%%n,v[pointer,2]]==1){
        A[(v[pointer,1]+1)%%n,v[pointer,2]]<-2
        v<-NULL
      } else if (A[(v[pointer,1]-1)%%n,v[pointer,2]]==1){
        A[(v[pointer,1]-1)%%n,v[pointer,2]]<-2
        v<-NULL
    
      } else if (A[v[pointer,1],(v[pointer,2]+1)%%n]==1){
        A[v[pointer,1],(v[pointer,2]+1)%%n]<-2
        v<-NULL
      } else if (A[v[pointer,1],(v[pointer,2]-1)%%n]==1){
        A[v[pointer,1],(v[pointer,2]-1)%%n]<-2
        v<-NULL
      }
  
  
  
  } else if ( coin > cump2){    ## pesticide is spray event
    for (i in 1:n){
      for (j in 1:n){
        if (A[i,j]==2){
          v<-rbind(v,c(i,j))
        }
      }
    }

    pointer<-sample(seq(1,length(v[,1]),1),1)
    A[v[pointer,1],v[pointer,2]]<-0
    v<-NULL
  } ## end the "else if" statement for pesticide spray event
  
  S0t<-c(S0t,S0) # record the number of pesticide states after one event
  S1t<-c(S1t,S1) #    ""                empty land states  ""
  S2t<-c(S2t,S2) #    ""                ant colony states  ""

} # end main for loop


 plot(timeseries,S0t/2500,ylim=c(0,1),col='green',type="l",lwd=.2,
      xlab="time (weeks)",ylab="state density")
      
 title("Cellular Automaton Simulation")
 lines(timeseries,S2t/2500,col='blue',xlab="time")
 lines(timeseries,S1t/2500,col='red')
# total_rate<-function(){  #sums each  the number of cells by "state"
\end{verbatim}



\end{document}